\newtheorem{theorem}{Theorem}
\title{A Scalable, Portable, and Memory-Efficient Lock-Free FIFO Queue}
\author{Ruslan Nikolaev}
\affil{rnikola@vt.edu\\Virginia Tech, USA}
\date{}
\begin{document}

\maketitle

\begin{abstract}
We present a new lock-free multiple-producer and multiple-consumer
(MPMC) FIFO queue design which is scalable and, unlike existing high-performant
queues, very memory efficient. Moreover, the design is ABA safe and does not
require any external memory allocators or safe memory reclamation techniques,
typically needed by other scalable designs. In fact, this queue itself can be
leveraged for object allocation and reclamation, as in data pools.
We use FAA (fetch-and-add), a specialized and more scalable than CAS
(compare-and-set) instruction, on the most contended hot spots of the
algorithm. However, unlike prior attempts with FAA, our queue is both
lock-free and linearizable.

We propose a general approach, SCQ, for bounded queues. This approach can
easily be extended to support unbounded FIFO queues which can store
an arbitrary number of elements. SCQ is portable across virtually all
existing architectures
and flexible enough for a wide variety of uses. We measure the performance
of our algorithm on the x86-64 and PowerPC architectures. Our evaluation
validates that our queue has exceptional memory efficiency compared to other
algorithms and its performance is often comparable to, or
exceeding that of state-of-the-art scalable algorithms.

\end{abstract}

\algnewcommand{\Null}{\textbf{nullptr}}%
\algnewcommand{\algorithmicgoto}{\textbf{goto}}%
\algnewcommand{\Goto}[1]{\algorithmicgoto~\ref{#1}}%
\algdef{SE}[DOWHILE]{Do}{doWhile}{\algorithmicdo}[1]{\algorithmicwhile\ #1}%
\algnewcommand\Not{\textbf{!}}
\algnewcommand\AndOp{\textbf{and}\xspace}
\algnewcommand\ModOp{\textbf{mod}\xspace}
\algnewcommand\OrOp{\textbf{or}\xspace}

\SetKwIF{If}{ElseIf}{Else}{if (}{)}{else if}{else}{endif}

\SetKwRepeat{Do}{do}{while}
\SetKwProg{Fn}{}{}{}
\newcommand{\removelatexerror}{\let\@latex@error\@gobble}

\def\keywords{\vspace{.5em}
{\noindent{\textit{Keywords}:\,\relax%
}}}
\def\endkeywords{\par}

\algnewcommand{\LineComment}[1]{\State \(\triangleright\) #1}

\keywords{FIFO, queue, ring buffer, lock-free, non-blocking}

\section{Introduction}
\label{sec:intro}

Creating efficient concurrent algorithms for modern multi-core architectures
is challenging. Efficient and scalable lock-free FIFO
queues proved to be especially hard to devise.
Although elimination techniques address LIFO
stack performance~\cite{Hendler:2004:SLS:1007912.1007944},
their FIFO counterparts~\cite{Moir:2005:UEI:1073970.1074013} are somewhat
more restricted:
a \textit{dequeue} operation can eliminate an \textit{enqueue}
operation only if all preceding queue entries have already been consumed.
Thus, FIFO elimination is more suitable in specialized cases and, typically,
shorter queues.
Relaxed versions of FIFO queues, which can reorder elements, were also proposed~\cite{Kirsch:2013:FSL:2960356.2960376}, but they cannot be used
when a strict FIFO ordering is required. FIFO
queues are important
in many applications, especially in fixed-size data pools which
use bounded \textit{circular queues} (\textit{ring buffers}).

Most correct linearizable and lock-free implementations of such queues rely
on the compare-and-set (CAS) instruction. Although this instruction
is powerful, it does not scale well as the contention grows.
However, less powerful instructions such as fetch-and-add (FAA)
are known to scale better.
In Figure~\ref{fig:faa}, we demonstrate execution time for
FAA vs. CAS measured in a tight loop for the corresponding
number of threads. The results are for
an almost ``ideal case'' to simply emulate FAA in a CAS loop.
As CAS loops in typical algorithms are more complex,
the actual gap is bigger.

FAA may seem to be a natural fit for ring buffers when updating queue's
head and tail pointers. The reality, however, is more
nuanced when designing lock-free queues.
Many straight-forward algorithms without explicit locks that use FAA, e.g.,~\cite{Krizhanovsky:2013:LMM:2492102.2492106}, are actually not lock-free~\cite{Feldman:2015:WMM:2835260.2835264} because it is possible to find
an execution pattern where no thread can make progress.
Lack of true lock-freedom manifests in suboptimal performance when
some threads are preempted since other threads are effectively blocked when the preemption happens
in the middle of a queue operation. Additionally, such queues
cannot be safely used in environments where blocking is not permitted.
Even if FAA is not used, certain queues~\cite{vyakov} fail to achieve
linearizable lock-free behavior.
A case in point: an open source lock-free
data structure library,
liblfds~\cite{liblfds}, simply falls back~\cite{ringdisapp} to
the widely known Michael \& Scott's (M\&S) FIFO lock-free
queue~\cite{Michael:1998:NAP:292022.292026} in their ring buffer
implementation. While easy to implement, this queue does not
scale well as we show in Section~\ref{sec:eval}.

Despite the aforementioned challenges, the use of FAA is re-invigorated by recent concurrent FIFO
queue designs~\cite{Yang:2016:WQF:2851141.2851168,Morrison:2013:FCQ:2442516.2442527}. Unfortunately,~\cite{Yang:2016:WQF:2851141.2851168,Morrison:2013:FCQ:2442516.2442527}, despite their good performance,
are not always memory efficient as we demonstrate in Section~\ref{sec:eval}.
Furthermore, such queues rely on memory allocators
and safe memory reclamation schemes, thus creating a ``chicken and egg''
situation when allocating memory blocks. For example, if we simply want to
recycle memory blocks or create a queue-based memory pool for allocation,
reliance on an external memory allocator to allocate and deallocate
memory is undesirable. Furthermore, typical
system memory allocators, including jemalloc~\cite{jemalloc}, are not
lock-free. Lock-based allocators can
defeat the purpose of creating a purely lock-free algorithm since
they weaken overall progress guarantees. Moreover, in a number of use cases,
such as within OS kernels, blocking can be prohibited or undesirable.

\begin{figure}
\begin{minipage}{.5\textwidth}
\includegraphics[width=\textwidth]{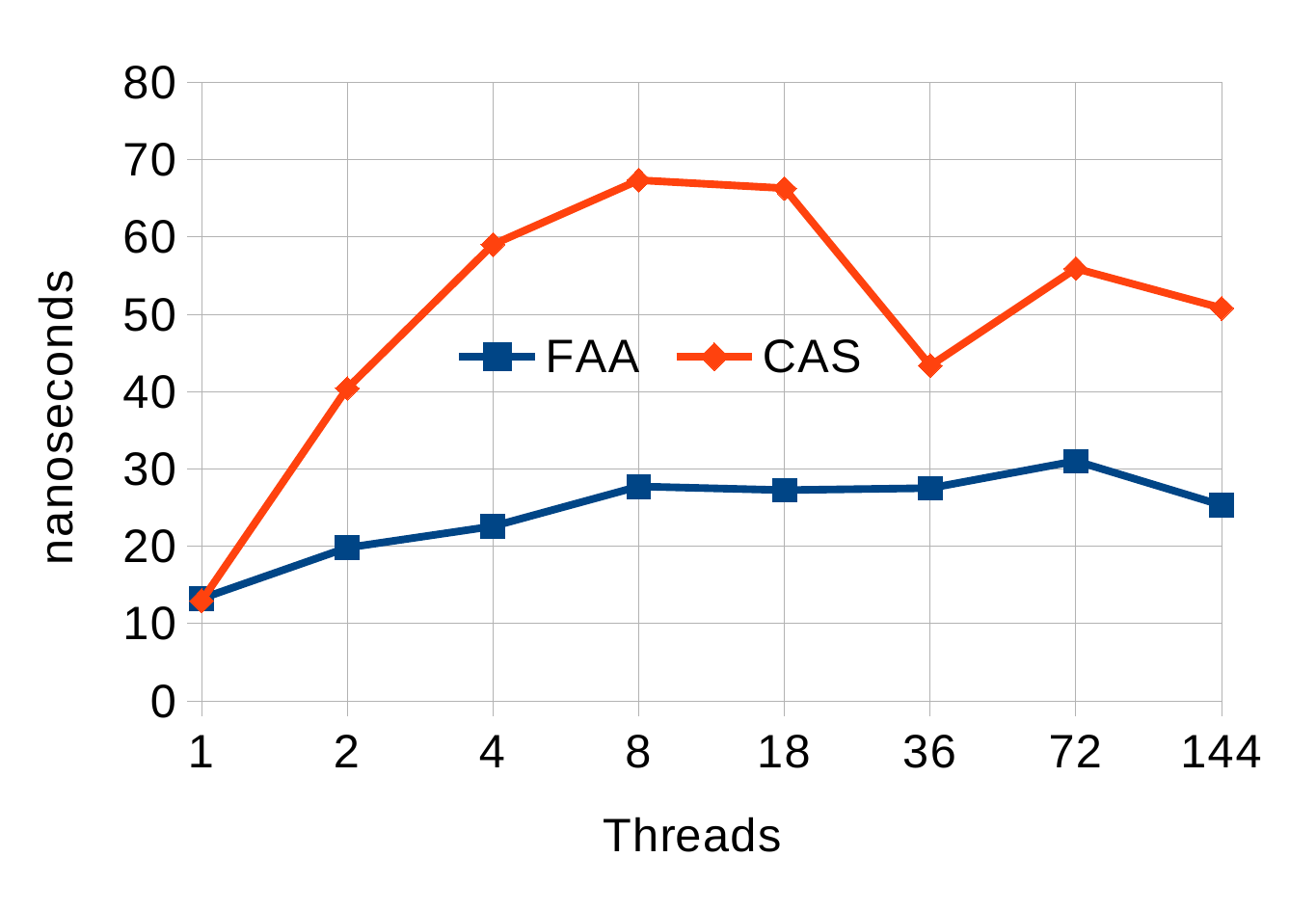}
\caption{FAA vs. CAS on 4x18 Xeon E7-8880.}
\label{fig:faa}
\end{minipage}%
\begin{minipage}{.5\textwidth}
\begin{algorithm2e}[H]
\textbf{int} Tail = 0, Head = 0\tcp*{Queue's tail and head}
\textbf{void *} Array[$\infty$]\tcp*{An infinite array}
\Fn{\textbf{void} enqueue(\textbf{void *} p)} {
\While {\upshape True} {
	T = FAA(\&Tail, 1)\;
	\tcp{Repeat the loop if the entry is}
	\tcp{already invalidated by dequeue()}
	\If {\upshape SWAP(\&Array[T], p) = $\bot$} {
		\Return\;
	}
}
}
\Fn{\textbf{void *} dequeue()} {
\While {\upshape True} {
	H = FAA(\&Head, 1)\;
	p = SWAP(\&Array[H], $\top$)\;
	\lIf {\upshape p $\neq$ $\bot$} {\Return p}
	\If {\upshape Load(\&Tail) $\le$ H + 1} {
	\Return \Null\tcp*{Empty}
}
}
}
\caption{Infinite array queue (susceptible to livelocks).}
\label{alg:infring1}
\end{algorithm2e}
\end{minipage}
\end{figure}

\paragraph*{Contributions of the paper}
\begin{itemize}
\item We introduce an approach of building ring buffers using indirection
and two queues.
\item We present our \textit{scalable circular queue} (SCQ) design which uses FAA.
To the
best of our knowledge, it is the first ABA-safe design that is scalable,
livelock-free and relies only on single-width atomic operations.
It is inspired by CRQ~\cite{Morrison:2013:FCQ:2442516.2442527} but
prevents livelocks and uses our indirection approach.
Although CRQ attempts to solve a similar problem, it
is livelock-prone, uses double-width CAS (unavailable on PowerPC~\cite{ppc:manual}, MIPS~\cite{mips:manual}, RISC-V~\cite{riscv:manual},
SPARC~\cite{sparc:manual}, and
other architectures) and is not standalone; it uses M\&S
queue as an extra
layer (LCRQ) to work around livelock situations.
\item Since unbounded queues are also used widely, we present the LSCQ design (Section~\ref{sec:unbounded}) which chains SCQ ring buffers in a list. LSCQ is more memory efficient than LCRQ.
\end{itemize}

\section{Background}
\label{sec:background}

\paragraph*{Lock-free algorithms}
We consider an algorithm lock-free if at least one thread
can make progress
in a finite number of steps. In other words, individual threads may
starve, but a preempted thread must not block other threads
from making progress. In contrast, spin locks (either implicit,
found in some incorrect algorithms, or explicit) will prevent other threads
from making further
progress if the thread holding the lock is scheduled out.

\paragraph*{Atomic primitives}
CAS (compare-and-set) is used universally by most lock-free algorithms.
However, one downside of CAS is that it can fail, especially
under large contention. Although specialized instructions
such as FAA (fetch-and-add) and SWAP do not reduce memory contention
directly, they are more efficiently implemented by hardware and
never fail. FAA and SWAP are currently implemented by x86-64~\cite{intel:manual},
ARMv8.1+~\cite{arm:manual}, and RISC-V~\cite{riscv:manual}.

\paragraph*{Safe memory reclamation (SMR)}
Most non-trivial lock-free algorithms, including queues,
require special treatment of memory blocks that need
to be deallocated, as concurrent threads may still access memory referred to by
pointers retrieved prior to the change in a corresponding lock-free data
structure.
For programming languages such as C/C++, where unmanaged code is prevalent,
lock-free \textit{safe memory reclamation} techniques such
as hazard pointers~\cite{hazardPointers} are used. The main high-level idea is that
each accessed pointer must be protected by a corresponding
API call. When done, the thread's pointer reservation can be reset.
When SMR knows that a memory block can be returned safely to the OS, it
triggers memory deallocation.
Bounded SCQ does not need SMR, but certain other lock-free queues such
as LCRQ rely on SMR by their design.

\paragraph*{Infinite array queue}
Figure~\ref{alg:infring1} shows an infinite array queue, originally
described for the LCRQ design~\cite{Morrison:2013:FCQ:2442516.2442527}.
This queue is susceptible to livelocks, but our infinite array queue as
well as the SCQ design are inspired by it.
Initially, the queue is empty, i.e., all its entries are set
to a special $\bot$ value.
\textit{enqueue} uses FAA on \texttt{Tail} to retrieve a slot which will be used
to place a new entry. An enqueuer will try to use this slot. However,
if the previous value is not $\bot$, some dequeuer already modified it,
and this enqueuer moves on to the next slot.
\textit{dequeue} uses FAA on \texttt{Head} to retrieve a slot which contains
a previously produced entry. A dequeuer will insert another special value,
$\top$, to indicate
that the slot can no longer be used. If the previous value is not $\bot$,
a corresponding enqueuer has already produced some entry, which is taken by this
dequeuer. Otherwise, this dequeuer moves on to the next slot. A corresponding
enqueuer, which arrives later, will be unable to use this slot.

\section{Preliminaries}
\label{sec:prelim}

\paragraph*{Assumptions}
We assume that a program has $k$ threads that can run on
any number of physical CPU cores. For the purpose of this work,
we will assume that the maximum (bounded) queue size is $n$. As
no thread should block on another thread, we will further
reasonably assume that $k\le n$.

For simplicity of our presentation, we will assume that
the system memory model is sequentially
consistent~\cite{Lamport:1979:MMC:1311099.1311750}. However,
the actual implementations of the algorithms (including implementations used in Section~\ref{sec:eval}) can rely on weaker memory models whenever possible.

\paragraph*{Data structure}
Our design is based on two key ideas. First, we use
indirection, i.e., data entries are not stored in the queue itself. Instead,
a queue entry simply records an index into the array of data. Second,
we maintain two queues: {\bf fq}, which keeps indices to
unallocated entries of the array, and {\bf aq}, which keeps allocated
indices to be consumed. A producer thread dequeues an index
from {\bf fq}, writes data to the corresponding array entry, and inserts
the index into {\bf aq}. A consumer thread dequeues the index from {\bf aq},
reads data from the array, and inserts the entry back into {\bf fq}.

Both queues maintain {\tt Head} and {\tt Tail} references (Figure~\ref{fig:ringbuffer}).
They are incremented when new entries are enqueued ({\tt Tail})
or dequeued ({\tt Head}). At any point, these references can be represented
as $j+i\times n$, where $j$ is an \textit{index} (position in the ring buffer),
$i$ is a {\it cycle},
and $n$ is a ring buffer size (must be power of 2 in our implementation).
For example, for {\tt Head}, we can calculate
index $j=(Head\mod n)$ and cycle $i=(Head\div n)$.

Queue entries mirror {\tt Head} and {\tt Tail} values.
Each entry also records an index into
the array, pointing to the data associated with the entry. Unlike {\tt Head}
and {\tt Tail}, it suffices to just record {\it cycle}~$i$,
as entry positions are redundant. We instead record an index into the
array that is of the same bit-length as the position.

\paragraph*{ABA safety}
The ABA problem is prevented by comparing cycles. As both
{\tt Head} and {\tt Tail}
are incremented sequentially, regardless of queue size, they will not wrap
around until after the number of operations exceeds the CPU word's largest
value, a reasonable assumption made by other
ABA-safe designs as well.

\begin{figure}
\begin{minipage}{.5\textwidth}
\includegraphics[width=.95\textwidth]{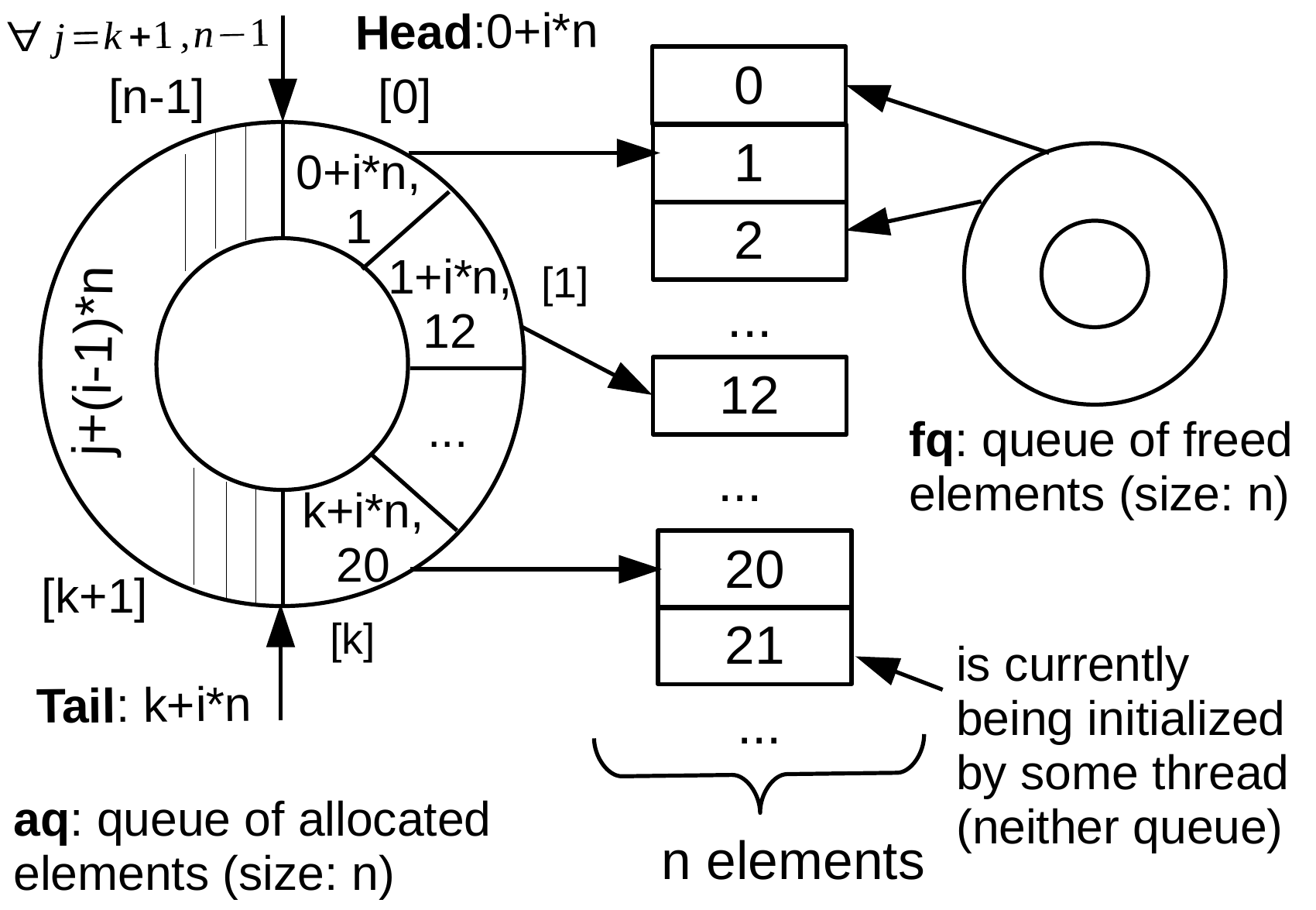}
\caption{Proposed data structure.}
\label{fig:ringbuffer}
\end{minipage}%
\begin{minipage}{.5\textwidth}
\begin{algorithm2e}[H]
\tcp{\textbf{data}: an array of pointers}
\tcp{\textbf{aq} is initialized empty}
\tcp{\textbf{fq} is initialized full}
\BlankLine
\Fn{\textbf{bool} enqueue\_ptr(\textbf{void *} ptr)} {
\textbf{int} index = \textbf{fq}.dequeue()\;
\lIf (\tcp*[f]{Full}) {\upshape index = $\varnothing$} {\Return False}
data[index] = ptr\;
\textbf{aq}.enqueue(index)\;
\Return True\tcp*{Success}
}
\BlankLine
\Fn{\textbf{void *} dequeue\_ptr()} {
\textbf{int} index = \textbf{aq}.dequeue()\;
\lIf (\tcp*[f]{Empty}) {\upshape index = $\varnothing$} {\Return \textbf{nullptr}}
ptr = data[index]\;
\textbf{fq}.enqueue(index)\;
\Return ptr\tcp*{Success}
}
\BlankLine
\caption{Example: storing pointers.}
\label{alg:ringptr}
\end{algorithm2e}
\end{minipage}%
\end{figure}

\paragraph*{Data entries and pointers}
Data array entries can be of any type and size.
It is not uncommon for programs to use a pair of queues with recyclable
elements, e.g.,
queues analogous to \textbf{aq} and \textbf{fq}.
In this case, a program can simply use indices
instead of pointers.
It is also
possible to simply store (arbitrary) pointers as data entries.
We can build a FIFO queue with data pointers
using {\bf aq} and {\bf fq} queues as shown in Figure~\ref{alg:ringptr}.
A producer thread dequeues an entry
from {\bf fq}, initializes it with a pointer and inserts
the entry into {\bf aq}. A consumer thread dequeues the entry from {\bf aq},
reads the pointer and inserts the entry back into {\bf fq}.
Note that \textit{enqueue} does not need to check if a queue is full.
It is only called when an available entry (out of $n$) exists
(e.g., can be dequeued from \textbf{fq} if enqueueing to \textbf{aq}, or vice versa).

\section{Naive Circular Queue (NCQ)}

\label{sec:ring}

\begin{figure}
\begin{subfigure}{.5\textwidth}
\begin{algorithm2e}[H]
\textbf{int} Tail = n, Head = n\tcp*{Initialization}
\ForAll{\upshape \textbf{entry\_t} Ent $\in$ Entries[n]} {Ent = \{ .Cycle: 0, .Index: 0 \}\;}
\Fn{\textbf{void} enqueue(\textbf{int} index)} {
\Do {\upshape \Not CAS(\&Entries[j], Ent, New)}{
T = Load(\&Tail)\; \label{eloop}
j = Cache\_Remap(T \ModOp n)\;
Ent = Load(\&Entries[j])\;
\If {\upshape Cycle(Ent) = Cycle(T)} {
	CAS(\&Tail, T, T + 1)\tcp*{Help to}

	\Goto{eloop}\tcp*{move tail}

}
\If {\upshape Cycle(Ent) + 1 $\neq$ Cycle(T)} {
	\Goto{eloop}\tcp*{T is already stale}
}
New = \{ Cycle(T), index \}\;
}
CAS(\&Tail, T, T+1)\tcp*{Try to move tail}
}
\end{algorithm2e}
\end{subfigure}%
\begin{subfigure}{.5\textwidth}
\begin{algorithm2e}[H]
\setcounter{AlgoLine}{16}
\Fn{\textbf{int} dequeue()} {
\Do {\upshape \Not CAS(\&Head, H, H+1)}{
H = Load(\&Head)\; \label{dlop}
j = Cache\_Remap(H \ModOp n)\;
Ent = Load(\&Entries[j])\;
\If {\upshape Cycle(Ent) $\neq$ Cycle(H)} {
	\If {\upshape Cycle(Ent) + 1 $=$ Cycle(H)} {\Return $\varnothing$\tcp*{Empty queue}}
	\Goto{dlop}\tcp*{H is already stale}

}
}
\Return Index(Ent)\;
}
\end{algorithm2e}
\end{subfigure}%
\caption{Naive circular queue (NCQ).}
\label{alg:ring}
\end{figure}

Let us first consider a simple algorithm, NCQ, which uses the presented
data structure but borrows an idea of moving queue's tail on behalf of another thread from M\&S queue~\cite{Michael:1998:NAP:292022.292026}. It achieves performance similar to M\&S queue but
does not need double-width CAS to avoid the ABA problem.
We use this queue as an extra baseline in Section~\ref{sec:eval}.

Figure~\ref{alg:ring} shows the {\it enqueue} and {\it dequeue} operations.
In the algorithm, we assume ordinary unsigned integer ring arithmetic
when calculating cycles.
Empty queues initialize all entries to cycle~0.
Their {\tt Head} and {\tt Tail} are both~$n$
(cycle~1). Full queues initialize all entries to cycle~0
along with allocated entry indices. Their {\tt Head} is~$0$ (cycle~0)
and {\tt Tail} is~$n$ (cycle~1).

Entries are always
updated sequentially. To reduce contention due to false sharing, we remap
queue entry positions by using a simple permutation function, {\it Cache\_Remap},
that places two adjacent entries
into different cache lines. The function remaps entries such that the same
cache line will not be reused in the ring buffer as long as possible.

When dequeuing,
we verify that {\tt Head}'s cycle number matches the cycle number of the entry {\tt Head}
is pointing to. If {\tt Head} is one cycle ahead, the queue is empty.
Any other mismatches (i.e., an entry is ahead)
imply that a producer has recycled this entry already. Therefore,
other threads must have already consumed the entry and incremented {\tt Head}
since then (i.e., the previously loaded {\tt Head} value is stale).

As discussed in Section~\ref{sec:prelim}, enqueueing is only possible when an available entry exists.
To successfully enqueue an entry, {\tt Tail} must
be one cycle ahead of an entry it currently points to. {\tt Tail}'s
cycle number equals the entry's cycle number if another thread has already inserted
an element but has not yet advanced {\tt Tail}. The current thread
helps to advance {\tt Tail} to facilitate global progress.
All other cycle mismatches (i.e., an entry is ahead) imply
that the fetched {\tt Tail} value is stale, as there has been at least an
entire round ($n$~enqueues) since it was last loaded.

\section{Scalable Circular Queue (SCQ)}

\label{sec:ringadv}
We will now consider a more elaborated design.
Our scalable circular queue (SCQ) is partially inspired by
CRQ~\cite{Morrison:2013:FCQ:2442516.2442527} as well as
by our data structure (Section~\ref{sec:prelim}). The major problem
with CRQ
is that it is not standalone due to its inherent susceptibility
to livelocks. CRQ must be coupled with a truly lock-free FIFO
queue (such as M\&S queue~\cite{Michael:1998:NAP:292022.292026}). If a livelock happens while
enqueueing entries, a slow path is taken, where the current CRQ instance is
``closed''. Then a new CRQ instance is
allocated and used to enqueue new entries. This approach replaces CRQ with a
list of CRQs (LCRQ). As discussed in introduction, this
design has to rely on a memory allocator and memory reclamation
scheme.

SCQ is not only standalone and
livelock-free, but also much
more portable across different CPU architectures. Unlike CRQ that requires
a special double-width CAS instruction, our SCQ algorithm only
needs single-width CAS,
available across virtually all modern architectures.
SCQ enables support for PowerPC~\cite{ppc:manual} where CRQ/LCRQ cannot be implemented~\cite{Yang:2016:WQF:2851141.2851168,Morrison:2013:FCQ:2442516.2442527}.
Similarly, SCQ enables support for MIPS~\cite{mips:manual}, SPARC~\cite{sparc:manual}, and RISC-V~\cite{riscv:manual} which, like PowerPC,
do not support double-width CAS.

\subsection{Infinite array queue}

\label{sec:infin}
We start off with the presentation of our infinite array queue. We
diverge from the original idea (Section~\ref{sec:background}) in
two major ways.

First, we present a solution to livelocks caused by dequeuers
by introducing a special ``threshold'' value that we describe below.
Livelocks occur when dequeuers incessantly invalidate slots that enqueuers
are about to use for their new entries. By using the threshold value,
we do not carry over the livelock problem to the practical implementation
as in case of CRQ, i.e., guarantee that at least one enqueuer as well as one dequeuer both succeed
after a finite number of steps at any point of time. Algorithms with this property were
previously called
\textit{operation-wise} lock-free~\cite{Morrison:2013:FCQ:2442516.2442527}. This term represents a stronger version of lock-freedom.

Second, our queue reflects the
design presented in Section~\ref{sec:prelim},
where we use indices into the array of data rather than
pointers. This approach guarantees that \textit{enqueue} always
succeeds (neither \textbf{aq} nor \textbf{fq} ever ends up with more
than $n$ elements). Rather, both ``full'' and ``empty'' conditions are detected
by the corresponding \textit{dequeue} operation as in Figure~\ref{alg:ringptr}.
Consequently, \textit{enqueue} does not need to be treated specially to detect
full queues.

\begin{figure}
\begin{subfigure}{.5\textwidth}
\begin{algorithm2e}[H]
\tcp{Threshold prevents livelocks}
\textbf{signed int} Threshold = -1\tcp*{Empty queue}
\BlankLine
\Fn{\textbf{void} enqueue(\textbf{int} index)} {
\While {\upshape True} {
	T = FAA(\&Tail, 1)\;
	\If {\upshape SWAP(\&Entries[T], index) = $\bot$} {
		Store(\&Threshold, $2n - 1$)\;
		\Return\;
	}
}
}
\end{algorithm2e}
\end{subfigure}%
\begin{subfigure}{.5\textwidth}
\begin{algorithm2e}[H]
\setcounter{AlgoLine}{7}
\Fn{\textbf{int} dequeue()} {
\lIf {\upshape Load(\&Threshold) $<$ 0} {\Return $\varnothing$}
\While {\upshape True} {
	H = FAA(\&Head, 1)\;
	index = SWAP(\&Entries[H], $\top$)\;
	\lIf {\upshape index $\neq$ $\bot$} {\Return index}
	\If {\upshape FAA(\&Threshold, -1) $\le$ 0} {\Return $\varnothing$}
	\lIf {\upshape Load(\&Tail) $\le$ H + 1} {\Return $\varnothing$}
}
}
\end{algorithm2e}
\end{subfigure}%
\caption{Infinite array queue. (We make it livelock-free by using a ``threshold''.)}
\label{alg:infring2}
\end{figure}

We note that a circular queue accommodates only a finite number
of elements, $n$. Furthermore, per assumptions in Section~\ref{sec:prelim},
the number of concurrent enqueuers or dequeuers never exceeds $n$ ($k\le n$).
We apply these restrictions to our infinite
queue and present a modified algorithm in Figure~\ref{alg:infring2}.

Suppose that \textit{enqueue} successfully inserts some entry
into the queue and sets the threshold on Line~6 prior to completion.
Let us consider the very last inserted entry for which the threshold
is set. (The threshold can also be set by any concurrent
\textit{enqueue} for preceding entries if their Line~6 executes
after last enqueuer's Line~5.) We will justify the threshold value later.
If dequeuers are active at that moment, we have either of the two scenarios:

\textit{\textbf{The last dequeuer is not ahead of the inserted entry.}}
In this case, the
last dequeuer is no farther than $n$ slots to the left of
the inserted entry (Figure~\ref{fig:livelock}). This is because we have at most
$n$ available slots which can be referenced by any concurrent enqueuers.
None of the enqueuers in this region (i.e., after the last dequeuer) can
fail because the corresponding slots are not being invalidated by the dequeuers.
(Note that this argument is only applicable to the infinite array queue. We will
further refine it for SCQ below.)

\textit{\textbf{The last dequeuer gets ahead of the inserted entry (either initially, or
in the process of dequeuing).}}
Any preceding concurrent dequeuer
either succeeds when its entry can be consumed (Line~13), or fails and retries.
Since \texttt{Head} increases monotonically (Line~11), if a failed dequeuer
ever retries, its new position can only be after the current position
of the last
dequeuer. However, since the inserted entry is the very last
for which \textit{enqueue} is complete (Line~6), all of the dequeuers located after
the last dequeuer (inclusively) are doomed to fail unless some other
concurrent enqueuer completes. (In the latter case, we recursively go back
to the very beginning of our argument where we have chosen the last inserted entry.)

\begin{figure}[ht!]
\centering
\includegraphics[width=.5\columnwidth]{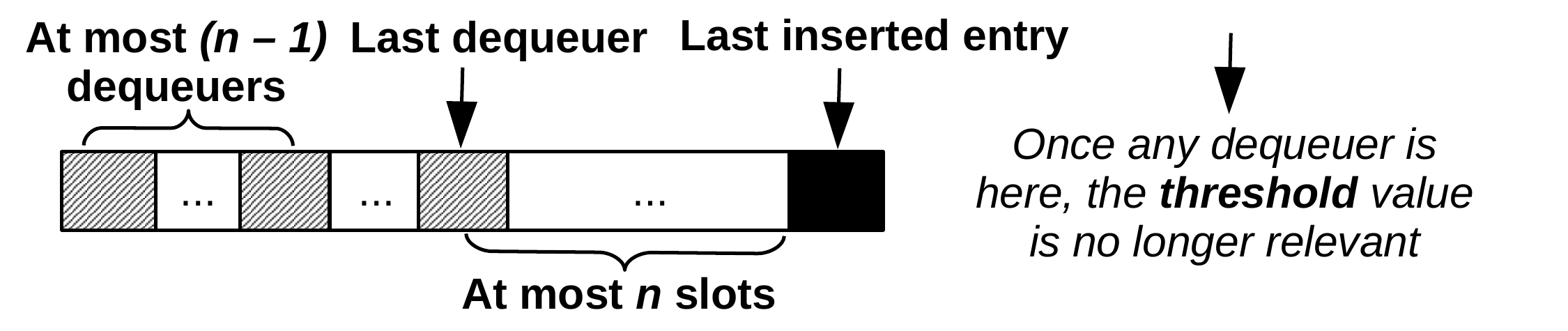}
\caption{Threshold bound for livelock prevention.}
\label{fig:livelock}
\end{figure}

In the second scenario, we are not concerned about the threshold value.
In other words, Lines~14-15 that terminate dequeuers after the inserted entry
do not cause any problems. All these dequeuers are guaranteed to fail
one way or the other. Some preceding dequeuer, which is already in
progress, will eventually fetch the inserted entry. However, in the first
scenario, we want to make sure that Line~14 does not prematurely terminate
dequeuers that are still trying to reach the inserted entry. Specifically, any
failed attempt is penalized by decreasing the threshold value (Line~14).

To reach the last inserted entry, the last dequeuer, or any dequeuer that
follows it later, will unsuccessfully traverse at most $n$ slots
(Figure~\ref{fig:livelock}). At the moment when the entry
is inserted, we may also have up to $n-1$ dequeuers (since $k\le n$) that
are lagging
behind (any distance away). When they fail, they are penalized by subtracting
the threshold value. When they retry, the are guaranteed to be
after the last dequeuer (Line~11). Thus, to guarantee that a dequeuer
eventually reaches the inserted entry, the threshold must be $2n-1$.

Note that the threshold approach carefully avoids memory contention on
the fast path in \textit{dequeue}. Only \textit{enqueue} typically
updates this value through an ordinary memory write with a barrier.
Moreover, if the threshold is still intact, it does not need to
be updated.

\subsection{SCQ algorithm}
\label{sec:scqalg}
In Figure~\ref{alg:ringadv}, we present our SCQ algorithm.
It is based on the modified infinite array queue
and (partially) CRQ~\cite{Morrison:2013:FCQ:2442516.2442527}.
It manages
cycles differently in \textit{dequeue}, making it possible to leverage
a simpler atomic OR operation instead of CAS.

Every entry in the SCQ buffer consists of the \textit{Cycle} and \textit{Index} components.
We also reserve one bit in each entry, \textit{IsSafe}, that
we describe below. This bit is similar to the corresponding bit in CRQ.

\begin{figure}
\begin{subfigure}{.5\textwidth}
\begin{algorithm2e}[H]
\textbf{int} Tail = 2n, Head = 2n\tcp*{Empty queue}
\textbf{signed int} Threshold = -1\;
\ForAll{\upshape \textbf{entry\_t} Ent $\in$ Entries[2n]} {Ent = \{ .Cycle=0, .IsSafe=1, .Index=$\bot$ \}\;}
\Fn (\tcp*[f]{Internal}) {\textbf{void} catchup(\textbf{int} tail, \textbf{int} head)} {
\While {\upshape \Not CAS(\&Tail, tail, head)} {
	head = Load(\&Head)\;
	tail = Load(\&Tail)\;
	\If {tail $\ge$ head} {
		\textbf{break}\;
	}
} 
}
\Fn{\textbf{void} enqueue(\textbf{int} index)} {
\While {\upshape True} {
T = FAA(\&Tail, 1)\;
j = Cache\_Remap(T \ModOp 2n)\;
Ent = Load(\&Entries[j])\;\label{ealoop}
	\If {\upshape Cycle(Ent) < Cycle(T) $\AndOp$ Index(Ent) = $\bot$ $\AndOp$ \hspace{5em} (IsSafe(Ent) \OrOp Load(\&Head) $\le$ T)} {
New = \{ Cycle(T), 1, index \}\;
\If {\upshape \Not CAS(\&Entries[j], Ent, New)} {\Goto{ealoop}}
	\If {\upshape Load(\&Threshold) $\ne 3n-1$} {Store(\&Threshold, $3n - 1$)}
\Return\;
}
}
}
\end{algorithm2e}
\end{subfigure}%
\begin{subfigure}{.5\textwidth}
\begin{algorithm2e}[H]
\setcounter{AlgoLine}{22}
\Fn{\textbf{int} dequeue()} {
	\If (\tcp*[f]{Check if}) {\upshape Load(\&Threshold) $<$ 0} {
		\Return $\varnothing$\tcp*{the queue is empty}
	}
\While {\upshape True} {
H = FAA(\&Head, 1)\;
j = Cache\_Remap(H \ModOp 2n)\;
Ent = Load(\&Entries[j])\;\label{daloop}
\If {\upshape Cycle(Ent) = Cycle(H)} {
	\tcp{Cycle can't change, mark as $\bot$}
	Atomic\_OR(\&Entries[j], \{ 0, 0, $\bot$ \})\;
	\Return Index(Ent)\tcp*{Done}
}
New = \{ Cycle(Ent), 0, Index(Ent) \}\;
\uIf {\upshape Index(Ent) = $\bot$} {
	New = \{ Cycle(H), IsSafe(Ent), $\bot$\}\;
}
\If {\upshape Cycle(Ent) < Cycle(H)} {
\If {\upshape \Not CAS(\&Entries[j], Ent, New)} {\Goto{daloop}}
}
T = Load(\&Tail)\tcp*{Check if}
\If (\tcp*[f]{the queue is empty}) {\upshape T $\le$ H + 1} {
	catchup(T, H + 1)\;
	FAA(\&Threshold, -1)\;
	\Return $\varnothing$\;
}
\If {\upshape FAA(\&Threshold, -1) $\le$ 0} {\Return $\varnothing$}
}
}
\end{algorithm2e}
\end{subfigure}%
\caption{Scalable circular queue (SCQ).}
\label{alg:ringadv}
\end{figure}

SCQ leverages the high-level idea from the infinite array queue described above. However,
SCQ replaces SWAP operations with CAS, as memory buffers
are finite, and the same slot can be referenced by multiple cycles.

When discussing the threshold value, we previously assumed
that no enqueuer can fail when all dequeuers are behind. In SCQ, this
is no longer true, as the same entry can be occupied by some previous cycle.
To bound the maximum distance between the last dequeuer and
the last enqueuer, we \textit{\textbf{double the capacity of the queue}}
while still keeping the original number of elements. When using
this approach, all the enqueuers after the last dequeuer can always
locate an unused ($\bot$) slot no farther than $2n$ slots away from it.
The threshold value should now become $(n-1+2n)=3n-1$.

In the algorithm, we need a special value for $\bot$. We reserve the very last
index, $2n-1$, for this purpose. Since, in SCQ, $n$ is a power of 2 number,
$\bot$ will have all its index
bits set to~$1$. As we show below, this allows \textit{dequeue} to consume entries using an atomic OR operation.
This value does not overlap with the actual data indices, which are still less
than~$n$.

SCQ also accounts for additional corner cases that are not
present in the infinite queue:

\textbf{A dequeuer arrives prior to its enqueuer counterpart, but the
corresponding entry is already occupied.} This happens
when the entry is occupied by some other cycle.
If this cycle is already newer (Line~36 is false), \textit{dequeue} simply
retries because the enqueuer counterpart is guaranteed to fail (Line~16).
However, if the cycle is older, \textit{dequeue} needs to mark it accordingly,
so that when the enqueuer counterpart arrives, it will fail. For this purpose,
we clear the \textit{IsSafe} bit, as in CRQ. The key idea
is that the enqueuer will have to additionally make sure that all active
dequeuers are behind when \textit{IsSafe} is set to~0 (Line~16) before
inserting a new entry.
Whenever \textit{IsSafe} becomes~$0$, only enqueuers can change it back to~$1$.
(Only Line~17 sets the bit to~$1$; Line~35 preserves bit's value, and Line~33
sets it to~$0$.)

\textbf{Enqueuers attempt to use slots that are marked unsafe.}
If \textit{IsSafe} on Line~16 is~$0$, an enqueuer will additionally make sure
that the dequeuer that needs to be accounted for has not yet started by reading
and comparing the current \texttt{Head} value.

When dequeuing elements, if cycles match (Line~30), a dequeuer
is guaranteed to succeed. The corresponding slot will not be recycled
until an entry is consumed. The only thing that can change
is the \textit{IsSafe} bit.
Unlike CRQ, to mark an entry as consumed, \textit{dequeue} issues an atomic OR
operation which sets all index bits to~$1$
while preserving entry's safe bit and cycle.

The \textit{catchup} procedure is similar to the
\textit{fixState} procedure from CRQ and is used when the tail is behind
the head. This allows to avoid
unnecessary iterations in \textit{enqueue} and reduces the risk of
contention.

Finally, when comparing cycles, we use a common approach with signed integer
arithmetic which takes care of potential wraparounds.

\paragraph*{Optimization}
Similarly to LCRQ, SCQ employs an additional optimization on dequeuers.
If a dequeuer arrives prior to the corresponding enqueuer, it will
not aggressively invalidate a slot. Instead, it will spin for a small
number of iterations with the expectation that the enqueuer arrives soon. This alleviates unnecessary contention on the head and tail
pointers, and consequently helps both dequeuers and enqueuers.

\subsection{SCQ-based unbounded queue (LSCQ)}
\label{sec:unbounded}

We follow LCRQ's main idea of maintaining a list of ring
buffers in our LSCQ design.
LSCQ is potentially more
memory efficient than LCRQ, as it is based on livelock-free SCQs
which do not end up being prematurely ``closed'' (Section~\ref{sec:eval}).
Since operations on the list are very rare, the cost
is completely dominated by SCQ operations.

\begin{figure}
\begin{subfigure}{.5\textwidth}
\begin{algorithm2e}[H]
\textbf{void *} ListHead = <empty \textbf{SCQ}>\;
\textbf{void *} ListTail = ListHead\;
\BlankLine
\Fn{\textbf{void} finalize\_SCQ(\textbf{SCQ} * cq)} {
Atomic\_OR(\&cq.Tail, \{.Value=0, .Finalize=1\})\;
}
\Fn{\textbf{void *} dequeue\_unbounded()} {
\While {\upshape True} {
  \textbf{SCQ *} cq = Load(\&ListHead)\;
  \textbf{void *} p = cq.dequeue\_ptr()\;
  \lIf {\upshape p $\ne$ \Null} {\Return p}
  \lIf {\upshape cq.next = \Null} {\Return \Null}
  Store(\&cq.aq.Threshold, $3n - 1$)\;
  p = cq.dequeue\_ptr()\;
  \lIf {\upshape p $\ne$ \Null} {\Return p}
  \If {\upshape CAS(\&ListHead, cq, cq.next)} {
    free\_SCQ(cq)\tcp*{Dispose of cq}
  }
}
}
\end{algorithm2e}
\end{subfigure}%
\begin{subfigure}{.5\textwidth}
\begin{algorithm2e}[H]
\setcounter{AlgoLine}{15}
\Fn{\textbf{void} enqueue\_unbounded(\textbf{void *} p)} {
  \While {\upshape True} {
  \textbf{SCQ *} cq = Load(\&ListTail)\;
  \If {\upshape cq.next $\ne$ \Null} {
    CAS(\&ListTail, cq, cq.next)\;
    \textbf{continue}\tcp*{Move list tail}
  }
  \tcp{Finalizes \& returns \textbf{false} if full}
  \If {\upshape cq.enqueue\_ptr(p, finalize=True)} {
    \Return\;
  }
  ncq = alloc\_SCQ()\tcp*{Allocate ncq}
  ncq.init\_SCQ(p)\tcp*{Initialize \& put p}
  \If {\upshape CAS(\&cq.next, \Null, ncq)} {
    CAS(\&ListTail, cq, ncq)\;
    \Return\;
  }
  free\_SCQ(ncq)\tcp*{Dispose of ncq}
  }
}
\end{algorithm2e}
\end{subfigure}%
\caption{Unbounded SCQ-based queue (LSCQ).}
\label{alg:lscq}
\end{figure}

In Figure~\ref{alg:lscq},
we present the LSCQ algorithm. We intentionally ignore the memory reclamation
problem (Section~\ref{sec:background}) which can be straight-forwardly solved
by the corresponding techniques such as hazard pointers~\cite{hazardPointers}.
The presented unbounded queue can store any fixed-size data entries, including
pointers (as in Figure~\ref{alg:lscq}), just like SCQ itself. When a ring
buffer is full,
we need to additionally ``finalize'' it, so that no new entries are
inserted by the concurrent threads. As in CRQ, we reserve one bit
in \texttt{Tail}. When \textbf{fq} does not have available entries (Line~3,
Figure~\ref{alg:ringptr}), we set the corresponding bit for
\textbf{aq}'s \texttt{Tail}.

We also modify \textit{enqueue} for \textbf{aq} such that it fails when
FAA on \texttt{Tail} returns a value with the ``finalized'' bit set.
Thus, any concurrent thread that tries to insert entries after \textbf{aq}
is being finalized, fails. In this case, we also need to place the entry back
into \textbf{fq}. This cannot fail since \textbf{fq} is never finalized.

Before unlinking \textit{cq} from the list (Line~14),
\textit{dequeue\_unbounded} checks again that \textit{cq}, which must already be finalized, is empty. Pending enqueuers may still access it. For the final
check, the threshold must be reset so that slots for the pending enqueuers
can be invalidated.

\subsection{SCQ for double-width CAS}
\label{sec:dcasscq}

The x86-64~\cite{intel:manual} and ARM64~\cite{arm:manual} architectures implement double-width
CAS, which atomically updates two contiguous words. We can leverage this
capability to build SCQ which avoids indirection when storing arbitrary
pointers. In this case, all entries consist of tuples. Each tuple
stores two adjacent integers instead of just one integer.
The \textit{index} field
of the first integer now simply indicates
if the entry is occupied~($0$) or available~($\bot$). The second
integer from the same tuple stores a pointer
which is used in lieu of an index.
Since pointers also need to be stored and retrieved, we change Lines~18, 31, and 37
to use double-width CAS accordingly.

This version of SCQ provides a fully compatible API such that
architectures without double-width CAS can still
implement the same queue through indirection.
In this queue, \textit{enqueue} becomes \textit{enqueue\_ptr}, and
\textit{dequeue} becomes \textit{dequeue\_ptr}.

If \textit{enqueue\_ptr} needs to identify full queues,
additional changes are required. In Figure~\ref{alg:dcas}, we show
a method which compares \texttt{Head} and \texttt{Tail} values.
The comparison is
relaxed and up to $k$ ($k\le n$) concurrent enqueuers can increment
\texttt{Tail} spuriously. Thus, \texttt{Tail} can now be up to $3n$ slots
ahead of \texttt{Head}. Since we previously assumed that number
to be $2n$, we increase the threshold from $3n-1$ to
$4n-1$. This method is imprecise and can only guarantee that at least $n$
elements are in the queue, but the actual number varies.
This is often acceptable, especially when creating
unbounded queues (Section~\ref{sec:unbounded}), which finalize full queues.

\begin{figure}
\begin{algorithm2e}[H]
\Fn{\textbf{bool} enqueue\_ptr(\textbf{void *} ptr)} {
\tcp{Add a full queue check before Line~12:}
T = Load(\&Tail)\;
\lIf {\upshape T $\ge$ Load(\&Head) + 2n} {\Return False}
... Modified enqueue() ...

\tcp{Add a full queue check in the loop after Line~22:}
\lIf {\upshape T+1 $\ge$ Load(\&Head) + 2n} {\Return False}
}
\end{algorithm2e}
\caption{SCQ for double-width CAS: checking for full queues.}
\label{alg:dcas}
\end{figure}

\section{Correctness}

We omit more formal linearizability arguments for NCQ due to its
simplicity.
SCQ's linearizability follows from
the arguments we make in Sections~\ref{sec:infin}~and~\ref{sec:scqalg}, as well as from
the corresponding CRQ linearizability
derivations~\cite{Morrison:2013:FCQ:2442516.2442527} because the SCQ
design has many similarities with CRQ.
Below we provide lock-freedom arguments for NCQ and SCQ.

\begin{theorem}
The NCQ algorithm is lock-free.
\end{theorem}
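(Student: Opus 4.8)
The plan is to argue by contradiction: assume no thread completes its operation in a finite number of steps, and derive a contradiction by exhibiting some thread that must in fact make progress. The key observation is that in NCQ the only unbounded-looking loops are the \texttt{do}/\texttt{while} in \textit{enqueue} and \textit{dequeue}, and each iteration that fails to make progress must end in a \texttt{goto eloop} / \texttt{goto dlop} (a restart) or a failed \texttt{CAS}. Since all the restarts are caused by an observed \emph{change} in shared state (a cycle mismatch that indicates \texttt{Tail}/\texttt{Head} was advanced, or an entry was recycled) and all failed \texttt{CAS}es are caused by a concurrent \emph{successful} \texttt{CAS} on the same location, the contradiction will come from showing that infinitely many restarts force infinitely many successful state changes, hence infinitely much completed work.

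First I would handle \textit{dequeue}. Here the loop is governed by \texttt{CAS(\&Head, H, H+1)}. If a dequeuer loops forever, then either (a) it keeps reaching the \texttt{CAS} and failing, which means some other thread keeps succeeding at \texttt{CAS(\&Head, \ldots)} — but each such success is a completed \textit{dequeue} (it reads a valid \texttt{Index(Ent)} and returns) or at least an irrevocable advance of \texttt{Head} by a thread that then returns; or (b) it keeps hitting \texttt{goto dlop} because \texttt{Cycle(Ent)} is neither equal to nor one behind \texttt{Cycle(H)} — i.e. the entry is strictly ahead, which by the invariant in Section~\ref{sec:ring} means \texttt{Head} has been advanced by a full round ($n$ successful \textit{enqueue}s worth of recycling) since \texttt{H} was loaded, again witnessing unboundedly many completed operations. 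Either way some thread makes progress, contradiction. The empty-queue branch simply returns, so it is not part of any infinite loop.

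Next I would handle \textit{enqueue}, which is the more delicate case because it contains the ``help advance \texttt{Tail}'' step. A stuck enqueuer either (a) repeatedly performs \texttt{CAS(\&Tail, T, T+1)} and jumps to \texttt{eloop} in the first \texttt{if} — but if that \texttt{CAS} ever succeeds it advanced \texttt{Tail}, and if it fails then some concurrent thread advanced \texttt{Tail}; (b) repeatedly takes \texttt{goto eloop} from the second \texttt{if} (\texttt{Cycle(Ent)+1} $\neq$ \texttt{Cycle(T)}), which — using the Section~\ref{sec:ring} argument that this means a full round of $n$ enqueues has elapsed — again witnesses completed operations; or (c) repeatedly fails the final \texttt{CAS(\&Entries[j], Ent, New)}, which means a concurrent \textit{dequeue} or \textit{enqueue} succeeded on that entry. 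In every case \texttt{Tail} is advanced unboundedly often by \emph{some} thread, or unboundedly many entry \texttt{CAS}es succeed; and I would then argue that an unbounded sequence of successful \texttt{Tail} advances cannot all be ``helping'' advances with no enqueue ever committing, because once \texttt{Tail} points at an entry whose cycle is exactly one behind (which must recur, since the helpers only fire when \texttt{Cycle(Ent)=Cycle(T)} and each helper moves \texttt{Tail} forward toward such a slot), the next enqueuer to reach it with a matching \texttt{Ent} load will succeed at its \texttt{CAS} unless preempted by another success — so some \textit{enqueue} completes.

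The main obstacle I anticipate is case (b)/(c) of \textit{enqueue} together with the helping step: ruling out an adversarial schedule in which \texttt{Tail} is perpetually advanced by helpers while every would-be committer is knocked out by a concurrent successful entry-\texttt{CAS} that is itself a \textit{dequeue} invalidation rather than an \textit{enqueue} commit. Resolving this cleanly requires a counting/invariant argument tying the number of \texttt{Head} advances, \texttt{Tail} advances, and per-entry \texttt{CAS} successes together — essentially showing that a dequeuer can only invalidate an entry that a \emph{prior} enqueuer committed, so infinitely many invalidations imply infinitely many commits. I would make this precise with the cycle invariant (entry cycle $\le$ \texttt{Tail} cycle, and an entry at \texttt{Head} with matching cycle was written by a completed \textit{enqueue}) already used informally in Section~\ref{sec:ring}, and conclude that in all cases at least one thread completes infinitely often, so NCQ is lock-free.
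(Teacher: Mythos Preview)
Your argument is workable but considerably more involved than the paper's, and the extra complexity stems from a misreading of NCQ: in NCQ, \textit{dequeue} never writes to \texttt{Entries[$\cdot$]} --- it only CASes on \texttt{Head}. Hence in your case~(c), a failed \texttt{CAS(\&Entries[j], Ent, New)} can only be caused by another \textit{enqueue} succeeding on that slot (which then returns after one more bounded step), never by a dequeuer. This single observation is essentially the paper's entire proof for the enqueue loop, and it dissolves your ``main obstacle'' outright: there is no dequeue-side slot invalidation in NCQ, so the adversarial schedule you worry about --- helpers perpetually advancing \texttt{Tail} while committers are knocked out by invalidating dequeuers --- simply cannot occur. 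You appear to be importing the invalidation mechanism from SCQ/CRQ into NCQ.

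The same observation also short-circuits your case~(a). You do not need the indirect argument via unbounded \texttt{Tail} advances: the branch condition \texttt{Cycle(Ent) = Cycle(T)} already witnesses that some enqueuer has successfully written this entry in the current cycle (since only enqueuers ever touch entries), and that thread will return after at most one further step. The paper's proof is just this: each retry of the enqueue loop is caused by another enqueuer's successful entry-CAS, and each retry of the dequeue loop is caused by another dequeuer's successful \texttt{Head}-CAS. Your decomposition into cases is sound and your dequeue analysis matches the paper's, but the counting/invariant machinery you sketch in the final paragraph is unnecessary once the ``only \textit{enqueue} touches entries'' fact is in hand.
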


\begin{proof}
The NCQ algorithm has two unbounded loops: one in \textit{enqueue} (Lines~5-15) and the other one in \textit{dequeue} (Lines~18-26).

If CAS fails in \textit{enqueue} causing it to repeat,
it means that the corresponding entry \texttt{Entries[j]} is changed
by another thread executing \textit{enqueue}, as \textit{dequeue}
does not modify entries. Consequently, that other thread is making
progress, i.e., succeeding in the \textit{enqueue} operation (Line~15).

If CAS fails in \textit{dequeue} causing it to repeat,
it means that the \texttt{Head} pointer is modified
by another thread executing \textit{dequeue}. Therefore, that other
thread is making progress, i.e., succeeding in the \textit{dequeue} operation (Line~26).
\end{proof}

\begin{theorem}
The SCQ algorithm is lock-free.
\end{theorem}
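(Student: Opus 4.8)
The plan is to argue lock-freedom by identifying every unbounded loop in the SCQ algorithm and showing that whenever a thread is forced to iterate, some other thread has made observable progress — either completing an operation, or making a monotone change (incrementing \texttt{Head}/\texttt{Tail}, decrementing \texttt{Threshold}) that cannot recur unboundedly without forcing a completion. There are essentially three loops to consider: the \texttt{while} loop in \textit{enqueue} (with the \texttt{goto ealoop} back-edge), the \texttt{while} loop in \textit{dequeue} (with \texttt{goto daloop}), and the inner loop in \textit{catchup}. I would dispatch \textit{catchup} first and quickly: its only back-edge is a failed CAS on \texttt{Tail}, which means another thread changed \texttt{Tail}; since \texttt{Tail} only ever increases (via FAA in \textit{enqueue} or CAS in \textit{catchup}), and the loop also exits as soon as \texttt{tail} $\ge$ \texttt{head}, it terminates after finitely many interfering \texttt{Tail} updates, each of which is itself progress by another thread.

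Next I would handle the CAS back-edges in \textit{enqueue} and \textit{dequeue} the same way as in the NCQ proof: a failed \texttt{CAS(\&Entries[j], Ent, New)} (Line~18 or Lines~38/41) means another thread mutated \texttt{Entries[j]} between the \texttt{Load} and the \texttt{CAS}. Any such mutation is performed only inside \textit{enqueue} (Line~18) or \textit{dequeue} (Lines~32, 38, 41), each of which, once it succeeds, advances that thread toward completing its operation; and since the \textit{Cycle} and \textit{IsSafe}/\textit{Index} fields of a fixed slot move monotonically within one cycle, only boundedly many such interfering mutations can occur before the slot's cycle advances past $T$ (resp.\ $H$) and the enqueuer/dequeuer re-reads a fresh \texttt{Tail}/\texttt{Head}. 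The remaining, and genuinely new, source of unbounded iteration is the ``retry'' paths that are \emph{not} triggered by a CAS failure: an enqueuer that finds its condition on Line~16 false, or a dequeuer whose slot cycle does not match and which neither returns $\varnothing$ nor fails the CAS. This is exactly the livelock scenario the threshold mechanism was designed for, and here I would invoke the analysis of Section~\ref{sec:infin} — now with the doubled capacity $2n$ and the threshold value $3n-1$ as justified in Section~\ref{sec:scqalg}.

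The heart of the argument, and the step I expect to be the main obstacle, is showing that the threshold cannot decrease without bound while enqueuers keep spinning. I would argue by contradiction: suppose, from some point on, no operation ever completes. Then \texttt{Tail} is updated only finitely often (every FAA on \texttt{Tail} in \textit{enqueue} either leads to a completion or to a retry, but a retry re-enters the loop and does another FAA — so I must be careful: the right statement is that if \textit{enqueue} never completes, its repeated FAAs push \texttt{Tail} arbitrarily far, which forces it past any fixed window and, by the ``doubled capacity'' argument of Section~\ref{sec:scqalg}, guarantees it eventually lands on a slot that is $\bot$ with an older cycle and a set \textit{IsSafe} bit once all dequeuers have fallen behind). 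Symmetrically, each failed dequeuer iteration strictly decrements \texttt{Threshold} (Line~45 or the \texttt{FAA(\&Threshold,-1)} on Line~43); once \texttt{Threshold} drops to $0$ or below, every subsequent \textit{dequeue} returns $\varnothing$ immediately at Line~24 or Line~45, so dequeuers can no longer invalidate slots — at which point a spinning enqueuer is guaranteed to succeed, contradicting the assumption. The delicate points to get right are: (i) that the threshold is re-armed to $3n-1$ only by a \emph{completing} enqueuer (Lines~20–21), so under the ``no completion'' assumption it is monotonically non-increasing; (ii) that the $3n-1$ bound genuinely covers the worst case of up to $n-1$ lagging dequeuers plus a $2n$-slot traversal window, as established in Section~\ref{sec:scqalg}; and (iii) that the \textit{IsSafe} bit cannot be used by dequeuers to indefinitely block enqueuers, which follows because only enqueuers re-set it to $1$ and an enqueuer reading $\texttt{Head}\le T$ (Line~16) will proceed regardless. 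Assembling (i)–(iii) gives the contradiction and hence both an enqueuer and a dequeuer complete in finitely many steps — in fact the stronger operation-wise lock-freedom claimed in Section~\ref{sec:infin}.
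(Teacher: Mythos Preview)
Your approach is essentially the paper's: identify the loops, argue CAS failures indicate concurrent progress, and use the threshold mechanism to rule out the enqueuer/dequeuer livelock. You are actually more thorough than the paper on two points---you explicitly dispatch the \textit{catchup} loop and the inner \texttt{goto} back-edges on failed entry-CAS, both of which the paper's proof glosses over.

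There is one subtle slip in your point~(i). You write that the threshold is re-armed only by a \emph{completing} enqueuer, and hence under the ``no completion after time $t$'' assumption it is monotonically non-increasing. But an enqueuer whose CAS on Line~18 succeeded \emph{before} $t$ may not execute Line~21 until \emph{after} $t$; such a pending enqueuer has already ``completed'' in the linearization sense, yet it can still bump the threshold back up to $3n-1$ after your chosen point. The paper handles exactly this: at most $k-1$ such pending enqueuers exist, so the threshold is re-armed at most $k-1$ additional times and \emph{thereafter} is monotonically non-increasing. Your contradiction goes through once you add this finite slack; without it, the ``monotone non-increasing'' claim is false as stated.
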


\begin{proof}
The SCQ algorithm has two unbounded loops: one in \textit{enqueue} (Lines~12-22) and the other one in \textit{dequeue} (Lines~26-45).

If the condition on Line~16 of \textit{enqueue} is false causing
it to repeat the loop, then two possibilities exist.
First, some dequeuer already invalidated the slot for this enqueuer
(\textit{Cycle(Ent)} and \textit{IsSafe(Ent)} checks) because it
arrived before the enqueuer.
Alternatively, the entry is occupied by a prior enqueuer (i.e., $\ne \bot$)
and is supposed to be consumed by some dequeuer which is yet to come.

In the latter case, the current enqueuer skips the occupied slot.
There may also exist other concurrent enqueuers which will skip
occupied slots as well.
Since the total number of elements for \textit{enqueue} is always capped,
the queue will never have more than $n$ entries. Thus, the enqueuers should
be able to succeed unless dequeuers keep invalidating their slots.

The first case is more intricate. If none of the enqueuers succeed,
dequeuers must not be able to
invalidate new slots after a finite number of steps. Once dequeuers stop
invalidating slots (using the threshold described in
Section~\ref{sec:infin}),
at least one enqueuer can make further progress by catching up
its \texttt{Tail} to the next available position and inserting a new element.

If the conditions on Lines~30, 40, or 44 of \textit{dequeue} are false
causing it to repeat the loop, then the following must be the reason for that.
Line~30 can only be false if the entry is not yet initialized
by the corresponding enqueuer. In this case, the dequeuer may potentially
iterate and invalidate slots as many as $3n$ times until either Line~43 or 45
terminates the loop. Line~45 is guaranteed to eventually terminate the loop as
long as no new entries are inserted by \textit{enqueue} (i.e., enqueuers
can be running, but none of them succeed). Any pending (almost completed)
enqueuer may still cause Line~21 to increase the threshold value temporarily
even though its entry was already consumed.
However, this at most is going to happen for $k-1$ pending
enqueuers. After that, the threshold value will be depleted causing
all active dequeuers to complete (Line~45). Since dequeuers are no longer
running, at least one new
enqueuer will be able to succeed. At that point, it will reset
the threshold value (Line~21). After that, we recursively
repeat this entire argument again to show that at least one
following enqueuer will succeed in a finite number of steps.
\end{proof}

\section{Evaluation}
\label{sec:eval}

In this section, we evaluate our SCQ design against
well-known or state-of-the-art algorithms. We use and extend the
benchmark from~\cite{Yang:2016:WQF:2851141.2851168} which already
implements several algorithms.

In the evaluation, we show that SCQ achieves very high performance
while avoiding limitations that are typical to other high-performant
algorithms (i.e., livelock workarounds, memory reclamation, and
portability). We have also found
that state-of-the-art approaches, especially LCRQ, can have very high memory usage, a problem that does not exist in SCQ.

We present results for both bare-bones SCQ and the version that stores
arbitrary pointers (SCQP). Bare-bones SCQ is relevant because queue
elements in SCQ can be of any type, i.e., not necessarily pointers.
We compare SCQ and SCQP against M\&S FIFO
lock-free queue (MSQUEUE)~\cite{Michael:1998:NAP:292022.292026}, combining
queue (CCQUEUE)~\cite{Fatourou:2012:RCS:2145816.2145849} -- which is not a lock-free
queue but is known to have good performance, LCRQ~\cite{Morrison:2013:FCQ:2442516.2442527} -- a queue
that maintains a lock-free list of ring buffers (CRQ); CRQs
cannot be used separately, as they are susceptible to livelocks,
WFQUEUE -- a recent scalable wait-free queue design~\cite{Yang:2016:WQF:2851141.2851168}. These algorithms provide reasonable baselines as they represent
well-known or state-of-the-art (in scalability) approaches.
We also present NCQ as an additional baseline. NCQ uses the same data structure
as SCQ, but its design is reminiscent of MSQUEUE.
Finally, we provide FAA (fetch-and-add) throughputs to
show the potential for scalability. FAA is not a real algorithm,
it simply implements atomic increments on \texttt{Head} and \texttt{Tail} when
calling \textit{dequeue} and \textit{enqueue} respectively.
We skip a separate LSCQ evaluation since SCQ is already
lock-free and can be used as is. LSCQ's costs are largely dominated by the
underlying SCQ implementation.

Performance differences between queues should be treated with great
caution. For example, LCRQ has better throughput
in a few tests, but it consumes a lot of memory under
certain circumstances. Also, neither LCRQ nor WFQUEUE are deployable
in all places where SCQ
can be used, e.g., fixed-size data pools. As mentioned in introduction,
this would trigger a ``chicken and egg'' situation, where data pools would
need to depend on some external (typically non lock-free) memory allocator.
Moreover, both LCRQ and WFQUEUE require safe memory reclamation by their
design;
the benchmark implements a specialized scheme for WFQUEUE and hazard
pointers~\cite{hazardPointers} for LCRQ and MSQUEUE. Finally, WFQUEUE
needs special per-thread descriptors, which reduce the API transparency.

We run all experiments for up to 144 threads on a 72-core machine consisting
of four Intel Xeon E7-8880~v3 CPUs with hyper-threading disabled, each running at
2.30~GHz and with a 45MB~L3 cache. The machine has 128GB of RAM and runs
Ubuntu 16.04~LTS. We use gcc 8.3 with the -O3 optimization flag. For SCQP, we use the double-width version (Section~\ref{sec:dcasscq}) as x86-64 can benefit from it. SCQP is compiled with
clang~7.0.1 (-O3) because it generates faster code for our implementation
(no such advantage for other queues).

We also evaluate queues on the PowerPC architecture. We run
experiments on an 8-core POWER8
machine. Each core has 8 threads, so we have 64 logical cores in total.
We do not disable PowerPC's simultaneous multithreading because the machine
does not have as many cores as our Xeon testbed.
All cores are running at 3.0 Ghz and have an 8MB~L3 cache. The machine has
64GB of RAM and runs Ubuntu 16.04~LTS. We use gcc 8.3 with
the -O3 optimization flag.
Since PowerPC does not support double-width CAS, it is impossible
to implement the LCRQ algorithm there. In contrast, SCQ and SCQP
both work well on PowerPC.
For SCQP, we use the version with indirection and two queues.

\begin{figure}[ht]
\begin{subfigure}{.5\textwidth}
\includegraphics[width=\textwidth]{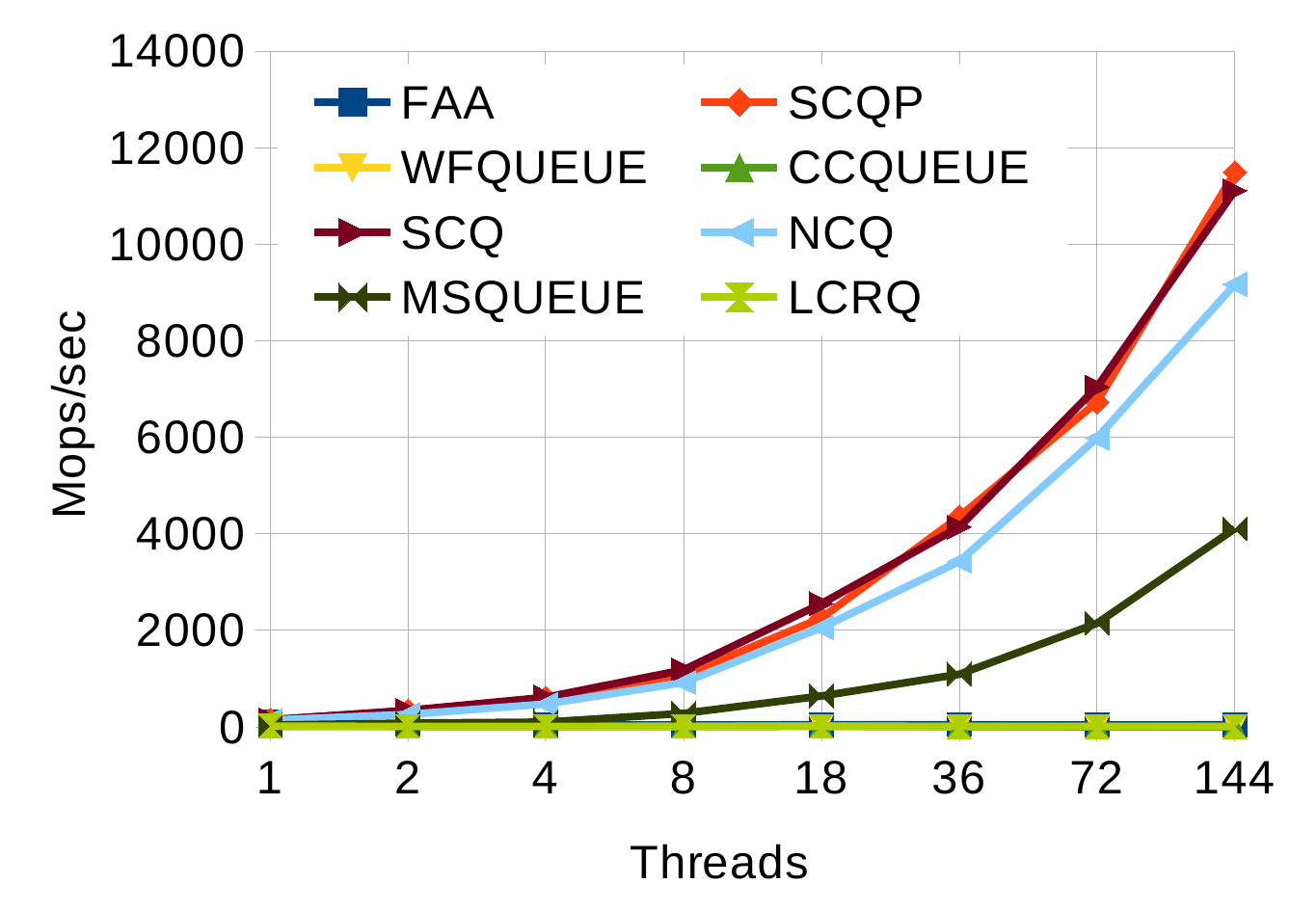}
\caption{4x18-core Intel Xeon E7-8880}
\label{fig:emptyx86}
\end{subfigure}%
\begin{subfigure}{.5\textwidth}
\includegraphics[width=\textwidth]{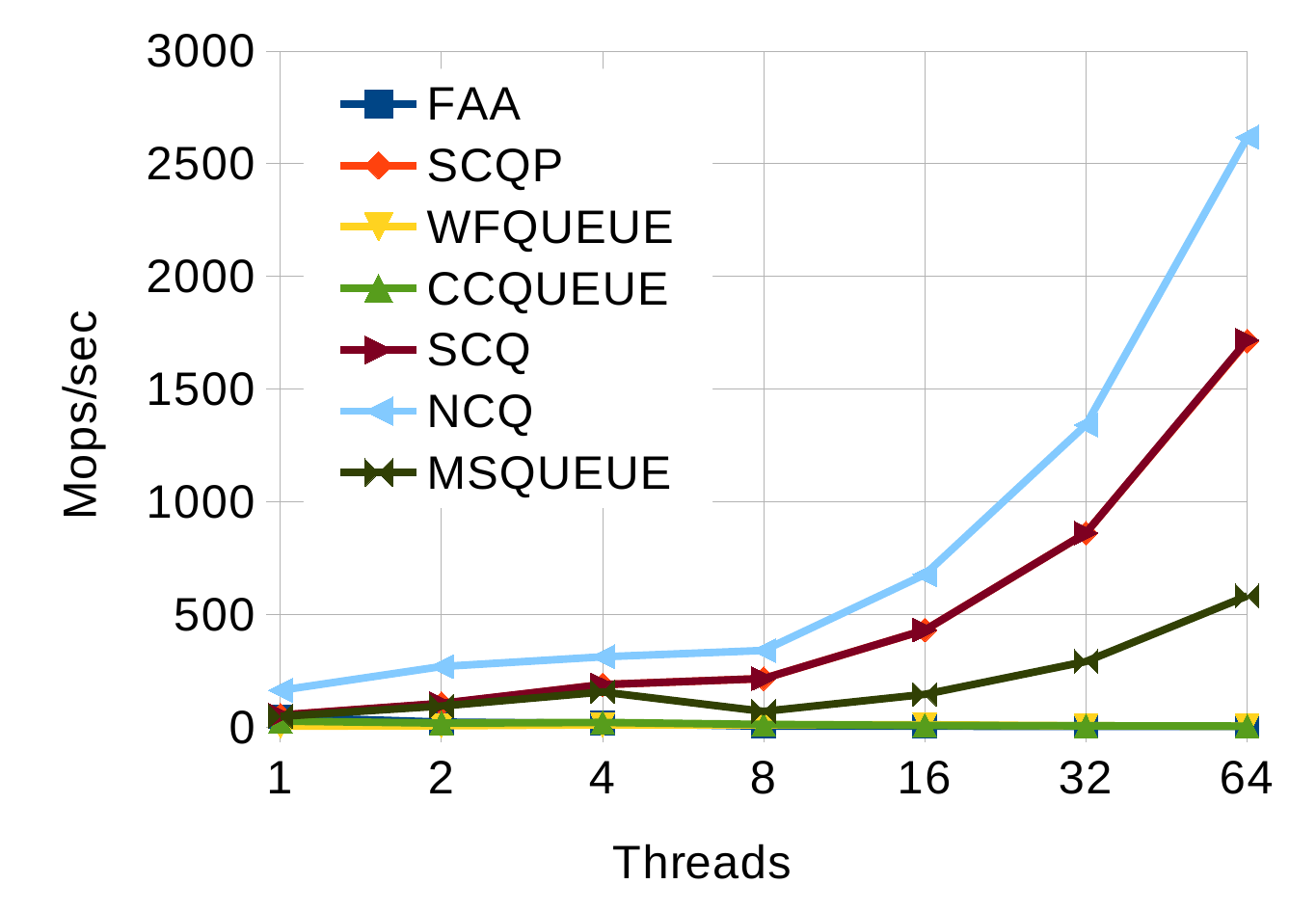}
\caption{8x8-core POWER8}
\label{fig:emptyppc}
\end{subfigure}%
\caption{Empty queue test, throughput of the dequeue operation.}
\label{fig:empty}
\end{figure}

We use jemalloc~\cite{jemalloc} to alleviate libc's malloc poor
performance~\cite{allocators}. Each data point is measured 10 times for
10000000 operations in a loop, we present the average.
The benchmark measures
throughput in a steady/hot state and protects against occasional outliers.
We use the default benchmark parameters from~\cite{Yang:2016:WQF:2851141.2851168}
to achieve optimal performance with LCRQ, CCQUEUE, and WFQUEUE.
For SCQ and NCQ, we have chosen a relatively small ring buffer size,
$2^{16}$ entries. (SCQ uses only half queue's capacity, $n=2^{15}$
entries, as discussed in Section~\ref{alg:ringadv}.) LCRQ uses $2^{12}$ entries in
each CRQ to
attain optimal
performance. Unlike SCQ, LCRQ wastes a lot of memory in each CRQ due to
cache-line padding.
Most of our results for x86-64 have peaks for 18 threads because each
CPU has 18 cores. Over-socket contention is expensive
and results in performance drops. PowerPC has an analogous picture.

In Figure~\ref{fig:empty}, we perform a simple experiment to measure
the cost of \textit{dequeue} on empty queues. MSQUEUE, NCQ, SCQ, and SCQP
perform reasonably well on both x86-64
(Figure~\ref{fig:emptyx86}) and PowerPC (Figure~\ref{fig:emptyppc})
since they do not dequeue elements aggressively.
LCRQ, WFQUEUE, and CCQUEUE take a performance hit in this
corner case. FAA is also slower than MSQUEUE, NCQ, SCQ, and SCQP because
it still needs to modify an atomic variable.
Slow dequeuing on empty queues was previously acknowledged
by WFQUEUE's authors~\cite{Yang:2016:WQF:2851141.2851168}.

\begin{figure}[ht]
\begin{subfigure}{.5\columnwidth}
\includegraphics[width=\textwidth]{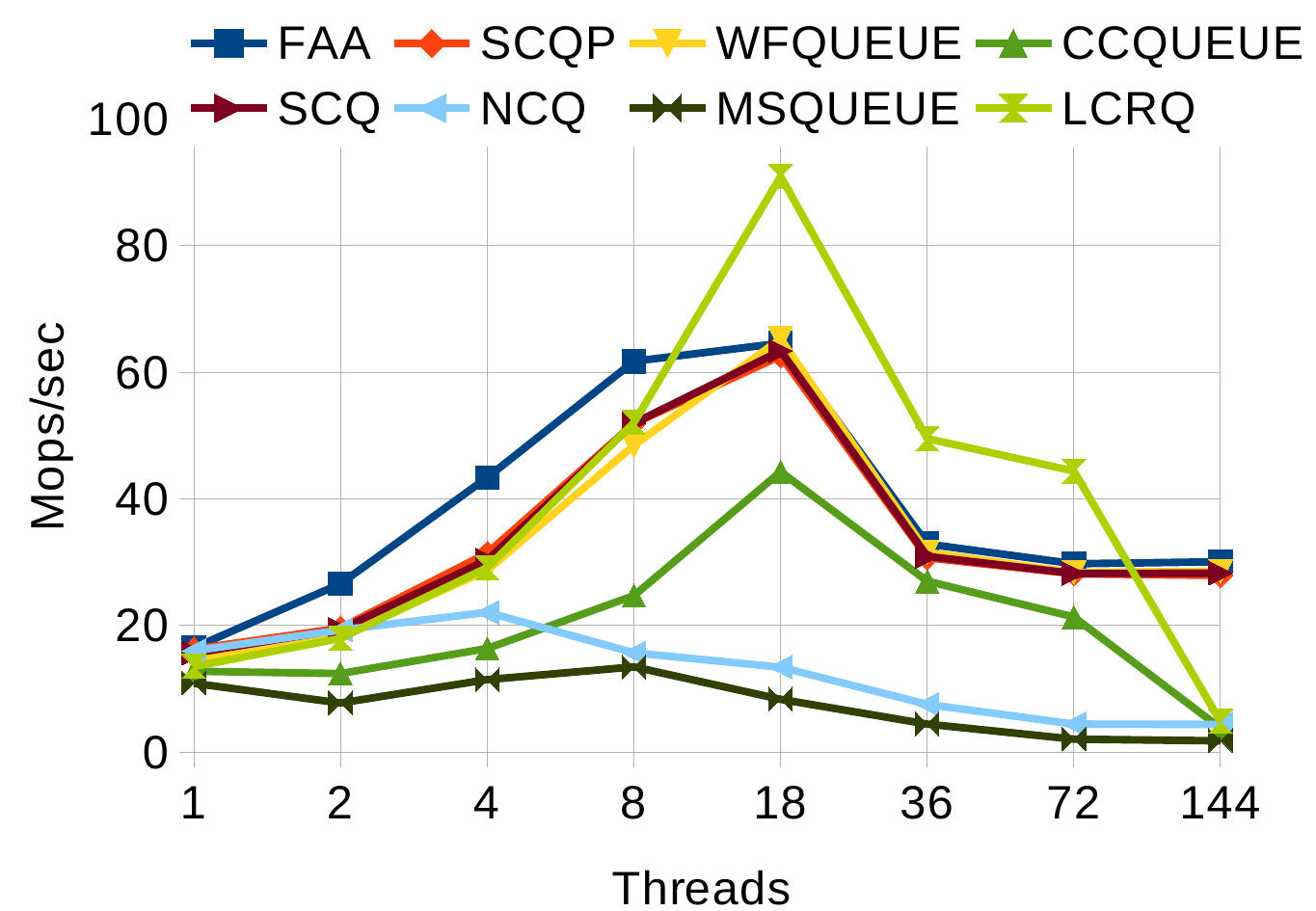}
\caption{Throughput (higher is better)}
\label{fig:halfhalfdel}
\end{subfigure}%
\begin{subfigure}{.5\columnwidth}
\includegraphics[width=\textwidth]{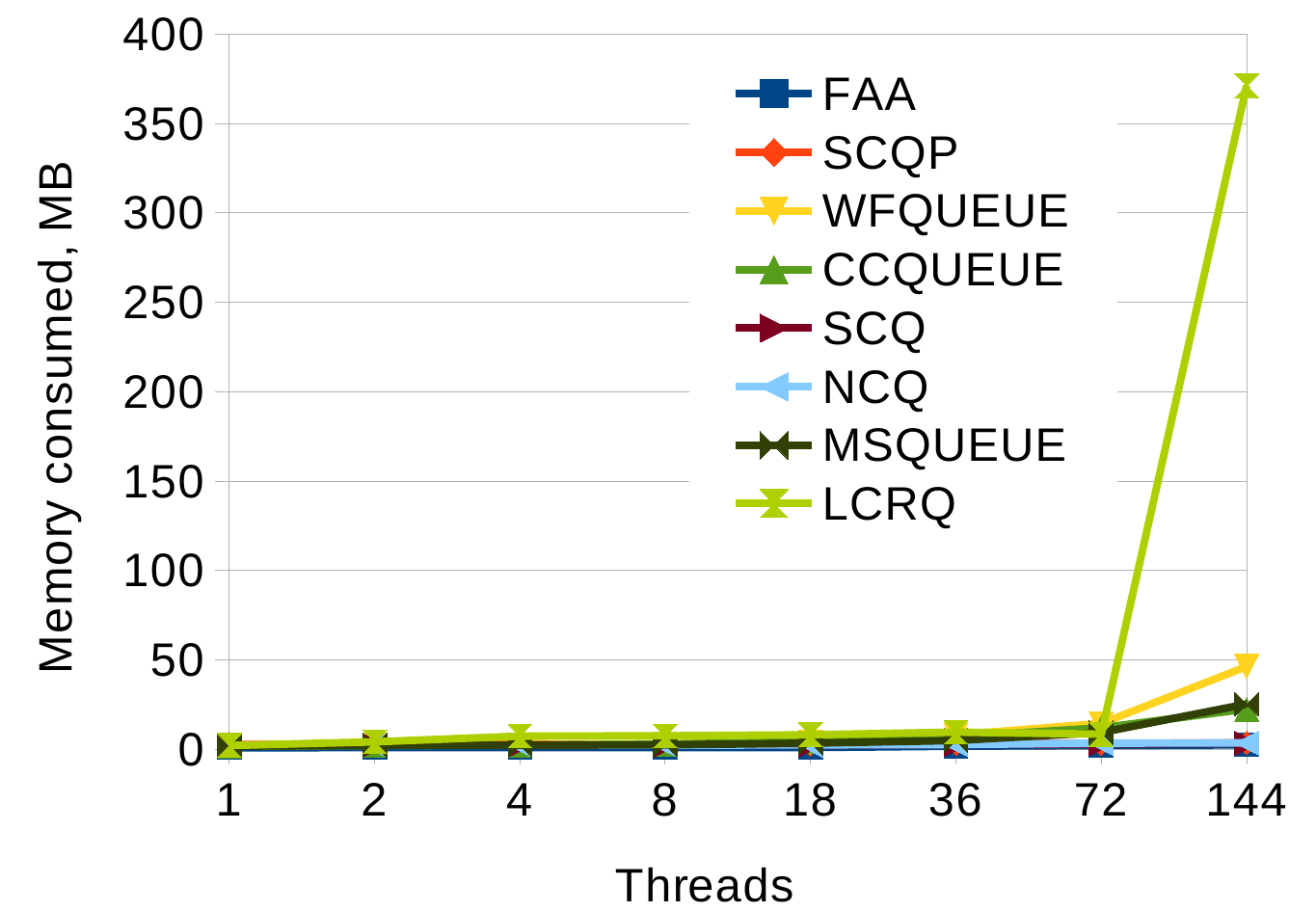}
\caption{Memory consumption (lower is better)}
\label{fig:halfhalfmem}
\end{subfigure}%
\caption{Memory efficiency test, 4x18-core Intel Xeon E7-8880 (standard malloc).}
 \label{fig:memory}
\end{figure}

To evaluate memory efficiency, we run an experiment with
50\% of \textit{enqueue} and 50\% of \textit{dequeue} operations that are
chosen by each thread randomly
(Figure~\ref{fig:memory}). For this test, we use libc's standard malloc
to make sure that memory pages are unmapped more aggressively.
We run the benchmark with its default configuration
that uses tiny delays between operations.
Using delays allows us to get more pronounced results while still showing
a realistic execution scenario.
It turns out that while, for the most part, LCRQ provides
higher throughput (Figure~\ref{fig:halfhalfdel}), it can also allocate
a lot of memory while running (Figure~\ref{fig:halfhalfmem}), up to
$\approx400$MB.
WFQUEUE's memory usage is also somewhat elevated (up to $\approx50$MB)
and exceeds that of MSQUEUE and CCQUEUE for most data points.
Conversely, SCQ, SCQP, and NCQ are very efficient; they only need a
small (512K-1MB), fixed-size buffer that is allocated for circular
queues. Overall, SCQ and SCQP win here as they both achieve great performance
with very little memory overhead.

Since the design of SCQ is related to LCRQ, we were particularly interested
in investigating LCRQ's high memory usage. As we suspected,
LCRQ was ``closing'' CRQs frequently due to livelocks. To maintain
good performance, LCRQ must use relatively large CRQs
(each of them has $2^{12}$ entries). However, due to livelocks, CRQs need to be
prematurely closed from time to time. Eventually, LCRQ ends up in a situation
where it
frequently allocates new CRQs, i.e., wasting memory greatly. Safe memory
reclamation additionally impacts the timing of deallocation, i.e., CRQs are not
deallocated immediately.

\begin{figure}[ht]
\begin{subfigure}{.5\columnwidth}
\includegraphics[width=\textwidth]{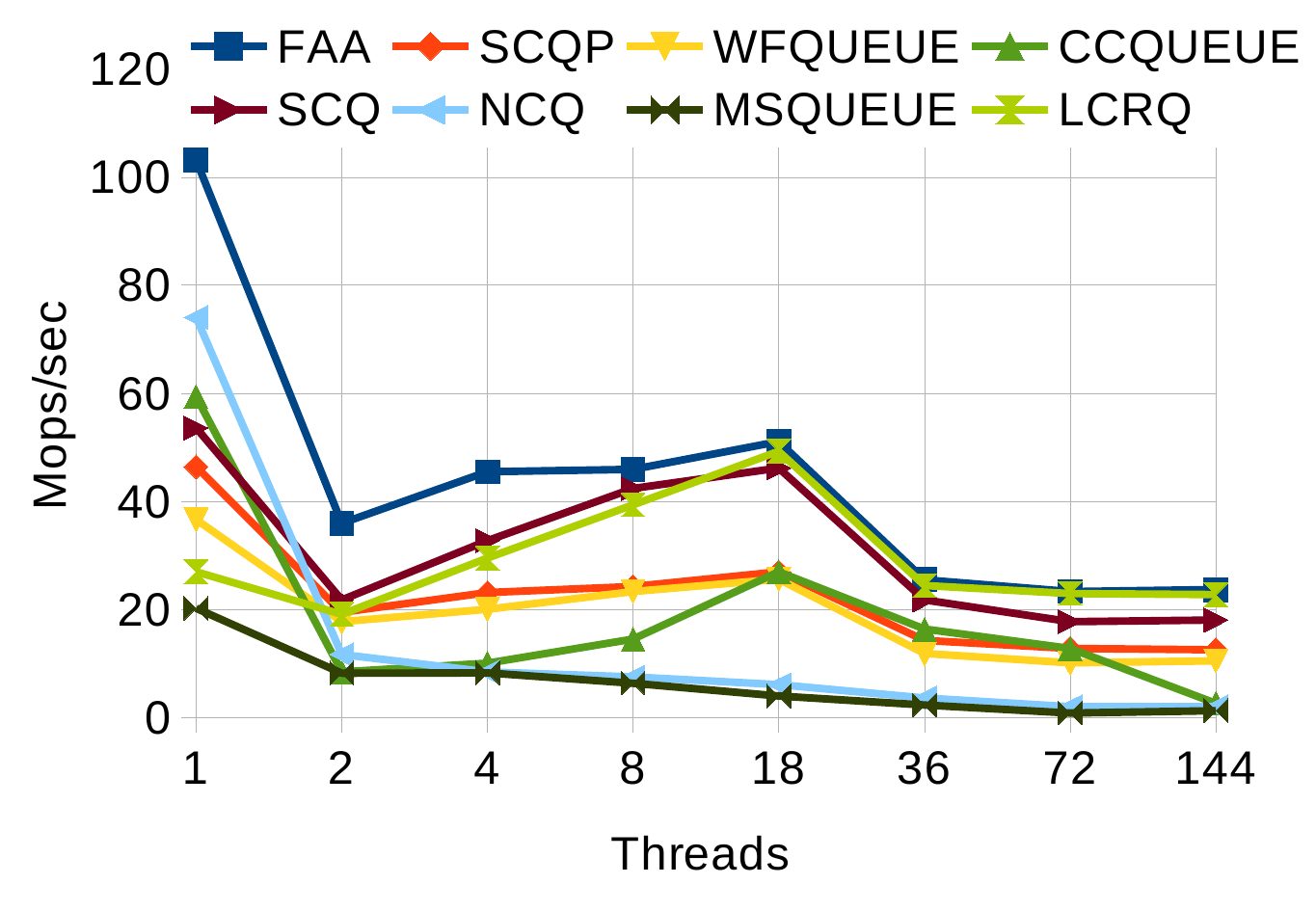}
\caption{enqueue-dequeue pairs}
\label{fig:bpairwise}
\end{subfigure}%
\begin{subfigure}{.5\columnwidth}
\includegraphics[width=\textwidth]{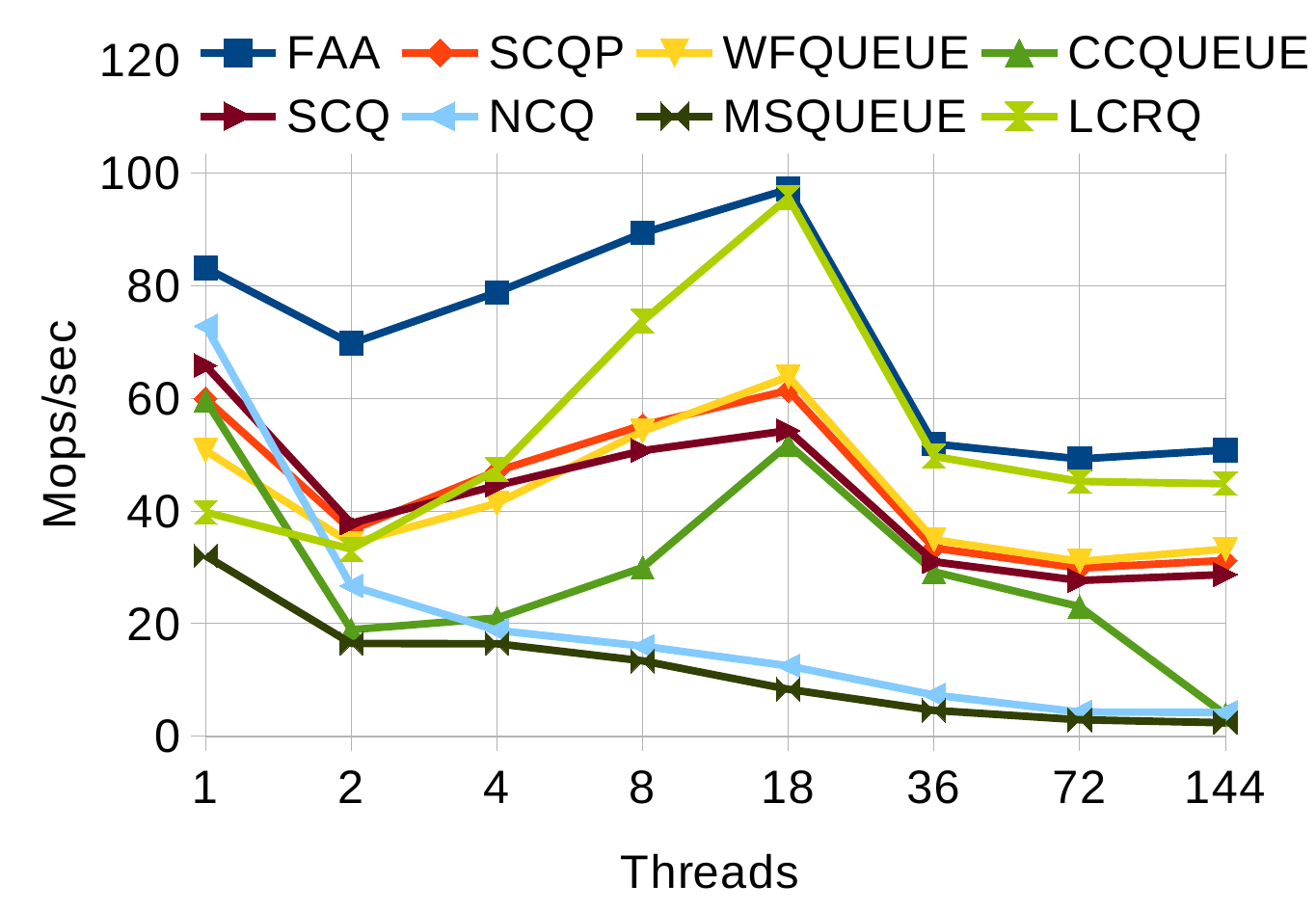}
\caption{50\% enqueues, 50\% dequeues}
\label{fig:bhalfhalf}
\end{subfigure}%
\caption{Balanced load tests, 4x18-core Intel Xeon E7-8880.}
\label{fig:uuux}
\end{figure}

Finally, we evaluate queues using operations by
multiple threads in a tight loop. In Figures~\ref{fig:bpairwise} and~\ref{fig:apairwiseppc}, we present throughput for the x86-64 and PowerPC architectures
respectively when using pairwise queue operations. In this experiment,
every thread executes \textit{enqueue} followed by \textit{dequeue} in a tight
loop. Since multiple concurrent threads
are running simultaneously, the order of dequeued elements is not
predetermined anyhow (even though enqueue and dequeue are paired).
For x86-64, SCQ and LCRQ are both winners and have roughly similar
performance. SCQP, WFQUEUE, and CCQUEUE (partially) attain half of their
throughput
on average. For PowerPC, SCQ is a winner: it generally outperforms
all other algorithms. SCQP and WFQUEUE are roughly identical, except
that WFQUEUE marginally outperforms SCQP for smaller concurrencies.
CCQUEUE is generally worse, except very small concurrencies.

\begin{figure}[ht]
\begin{subfigure}{.5\columnwidth}
\includegraphics[width=\textwidth]{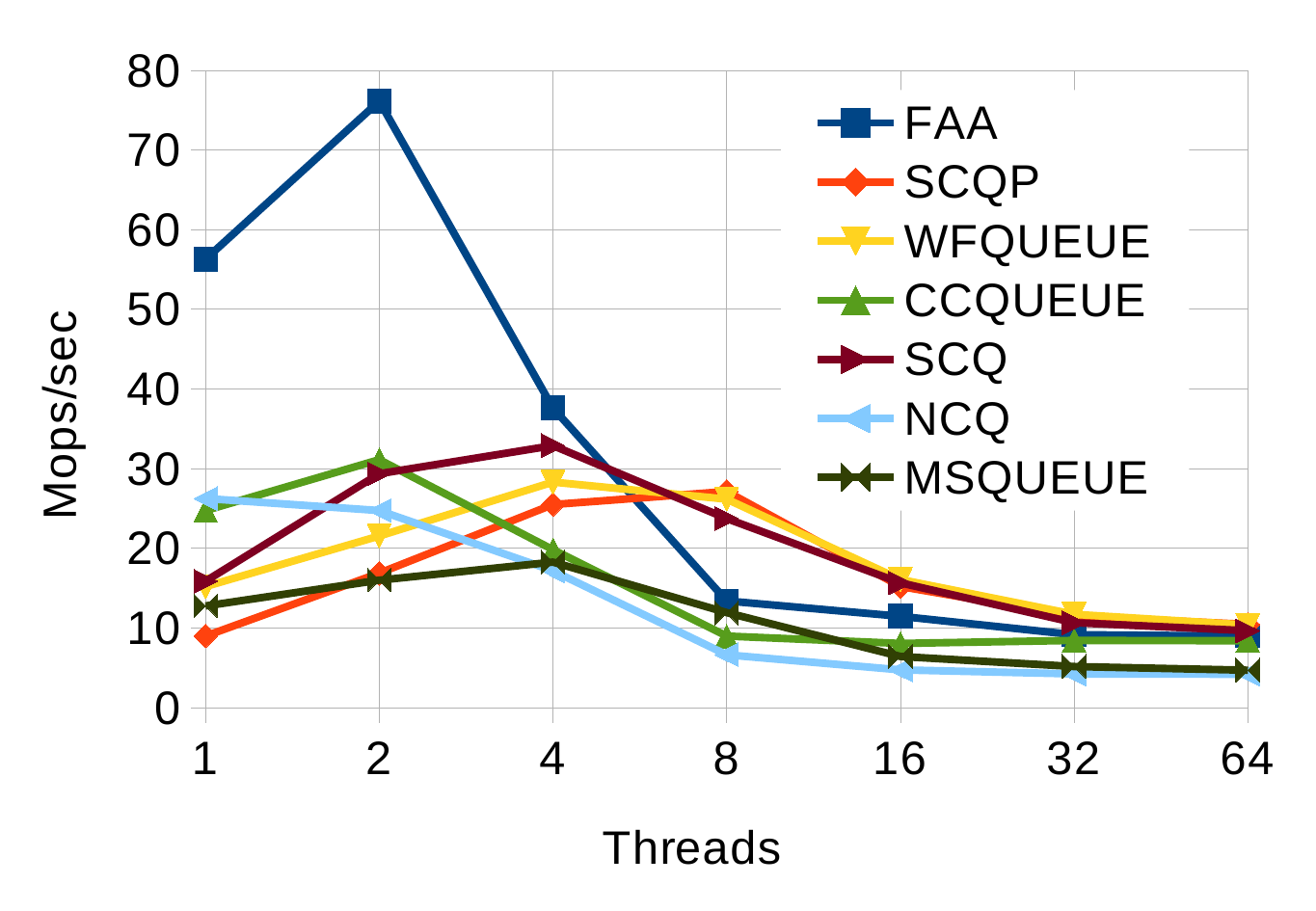}
\caption{enqueue-dequeue pairs}
\label{fig:apairwiseppc}
\end{subfigure}%
\begin{subfigure}{.5\columnwidth}
\includegraphics[width=\textwidth]{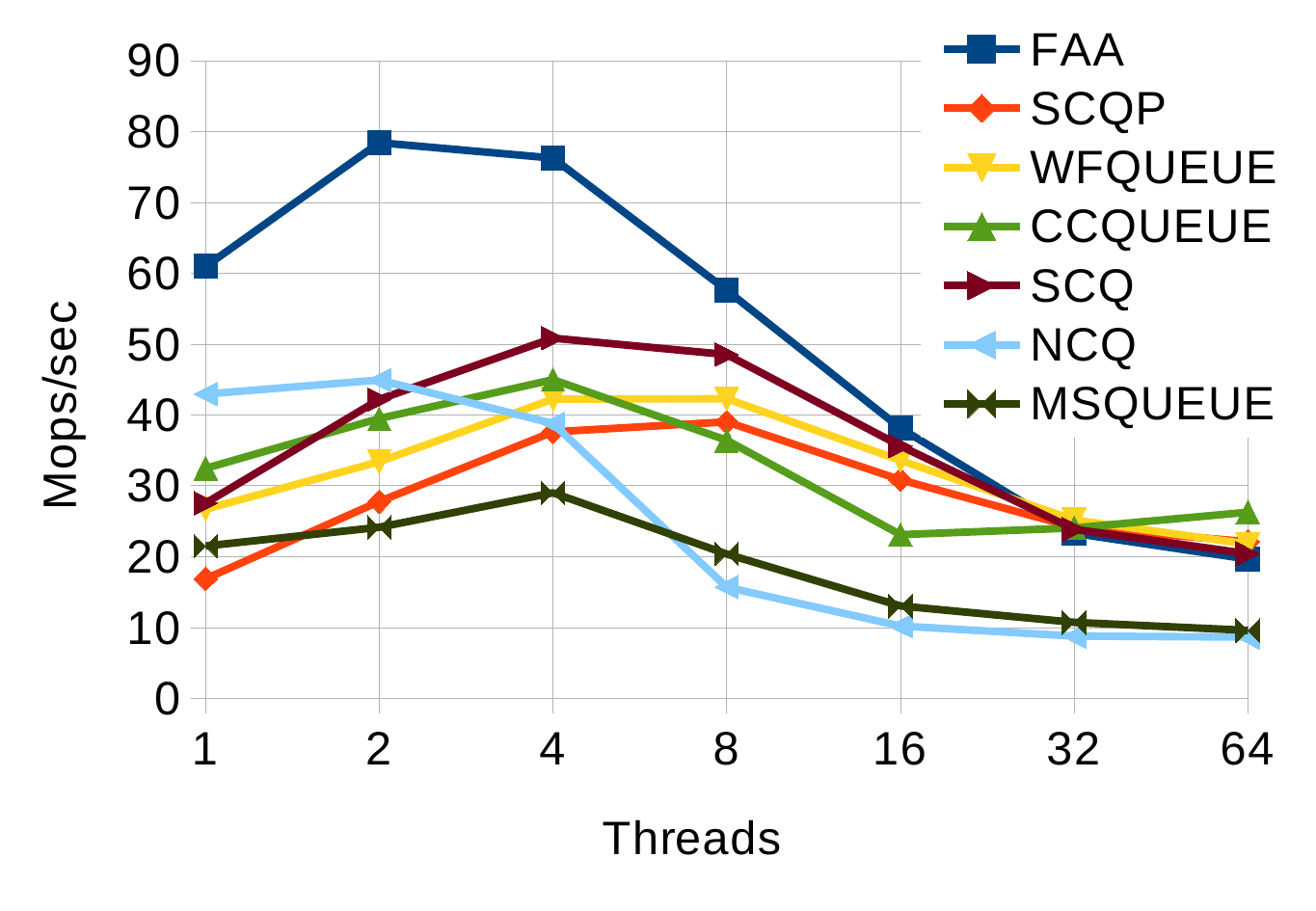}
\caption{50\% enqueues, 50\% dequeues}
\label{fig:ahalfhalfppc}
\end{subfigure}%
\caption{Balanced load tests, 8x8-core POWER8.}
\label{fig:ppc}
\end{figure}

In Figures~\ref{fig:bhalfhalf} and~\ref{fig:ahalfhalfppc}, we present results
for an experiment which selects operations randomly: 50\% of enqueues and 50\% of dequeues.
For x86-64, WFQUEUE and SCQP are almost identical. SCQP marginally
outperforms SCQ when concurrency is high, probably due to (occasional) cache
contention since entries are larger in double-width SCQP.
CCQUEUE is typically slower than WFQUEUE, SCQP, or SCQ.
LCRQ outperforms all of them most of the time. However, considering its
memory utilization, LCRQ may not be appropriate in a number of cases as
previously discussed.
For PowerPC, SCQ is a winner: it generally outperforms all other
algorithms. SCQP, WFQUEUE, and CCQUEUE are very close; WFQUEUE marginally
outperforms SCQP in this test.

\section{Related Work}

Over the last couple of decades, different designs were proposed for
concurrent FIFO queues as well as ring buffers.
A classical Michael \& Scott's lock-free FIFO queue~\cite{Michael:1998:NAP:292022.292026} maintains
a list of nodes. The list has the \textit{head} and \textit{tail} pointers.
These pointers must be updated using CAS operations. The queue can be
modified to avoid the ABA problem related to pointer updates. For that purpose,
double-width CAS is used to store an ABA tag for each pointer.

Existing lock-free ring buffer designs that do not benefit from FAA (e.g.,~\cite{Tsigas:2001:SFS:378580.378611}) are typically not very scalable.
Another approach~\cite{Feldman:2015:WMM:2835260.2835264}, though uses FAA,
needs a memory reclamation scheme and does not seem to scale as well
as some other algorithms.
Certain queues use FAA but are not linearizable. 
For example, \cite{Freudenthal:1991:PCF:106972.106998} maintains a queue size
and updates it with FAA. However, the queue may end up in inconsistent
state as previously discussed in~\cite{Morrison:2013:FCQ:2442516.2442527}. 

Certain bounded MPMC queues without explicit locks such
as~\cite{Krizhanovsky:2013:LMM:2492102.2492106,vyakov}
are relatively fast. However, these approaches are technically not
lock-free as discussed in~\cite{Feldman:2015:WMM:2835260.2835264,ringdisapp}.
Just as with explicit spin locks, lack of true lock-freedom manifests
in suboptimal performance when threads are preempted,
as remaining threads are effectively blocked when the preemption happens
in the middle of a queue operation.

Alternative concurrent designs were also considered.
CCQUEUE~\cite{Fatourou:2012:RCS:2145816.2145849} is a special combining
queue. The reasoning behind this queue is that it may be cheaper to execute
operations sequentially. It reuses the same node for enqueue-dequeue
pairs, so that no cache misses occur due to contention.

Finally, the use of FAA for queues is advocated by recent non-blocking FIFO
queue designs~\cite{Yang:2016:WQF:2851141.2851168,Morrison:2013:FCQ:2442516.2442527}. Unfortunately,~\cite{Yang:2016:WQF:2851141.2851168,Morrison:2013:FCQ:2442516.2442527}, despite their good performance,
are not always memory efficient.

\section{Conclusion}
In this paper, we presented SCQ, a scalable lock-free FIFO queue.
The main advantage of SCQ is that the queue is standalone,
memory efficient, ABA safe, and scalable. At the same time, SCQ does not
require a safe memory reclamation scheme or memory allocator.
In fact, this queue itself can be leveraged for object allocation
and reclamation, as in data pools.

We use FAA (fetch-and-add) for the most contended hot spots of the algorithm:
\texttt{Head} and \texttt{Tail} pointers. Unlike prior
attempts to build scalable queues with FAA such as CRQ, our queue is both
lock-free and linearizable. SCQ prevents livelocks directly in the ring
buffer itself without trying to work around the problem by allocating
additional ring buffers and linking them together.
SCQ is very portable and can be implemented virtually everywhere. It only
needs single-width CAS.
It is also possible to create unbounded queues based
on SCQs which are more memory efficient than LCRQ.

We thank the anonymous reviewers for their valuable feedback.

SCQ's source code is available at \url{https://github.com/rusnikola/lfqueue}.

\bibliography{lockfree}

\end{document}